\definecolor{myurlcolor}{rgb}{0,0,0.4}
\definecolor{mycitecolor}{rgb}{0,0.5,0}
\definecolor{myrefcolor}{rgb}{0.5,0,0}
\newcommand{\beq}[0]{\begin{equation}}
\newcommand{\eeq}[0]{\end{equation}}
\newcommand{\bw}[0]{\begin{widetext}}
\newcommand{\ew}[0]{\end{widetext}}
\newcommand{\bc}[0]{\begin{center}}
\newcommand{\ec}[0]{\end{center}}
\newcommand{\bwn}[0]{\begin{widetext}\begin{eqnarray}}
\newcommand{\ewn}[0]{\end{eqnarray}\end{widetext}}
\newcommand{\beqn}[0]{\begin{eqnarray}}
\newcommand{\eeqn}[0]{\end{eqnarray}}
\newcommand{\np}[0]{{\it e.g.}}
\newcommand{\proj}[1]{|#1\rangle\!\langle #1|}
\newcommand{\ket}[1]{|#1\rangle}
\newcommand{\tr}[0]{\mathrm{tr}}
\newcommand{\jedynka}[0]{\mathbbm{1}}
\newcommand{\non}[0]{\nonumber\\}
\def\calB{{\cal B}}
\def\calH{{\cal H}}
\def\calP{{\cal P}}
\def\calS{{\cal S}}
\def\calV{{\cal V}}
\newcommand{\one}{\leavevmode\hbox{\small1\normalsize\kern-.33em1}}
\def\tr{\mbox{tr}}
\def\beq{\begin{equation}}
\def\eeq{\end{equation}}
\def\be{\begin{equation}}
\def\ee{\end{equation}}
\def\ben{\begin{eqnarray}}
\def\een{\end{eqnarray}}
\def\eea{\end{array}}
\def\bea{\begin{array}}
\newcommand{\ot}[0]{\otimes}
\newcommand{\Tr}[1]{\mathrm{Tr}#1}
\newcommand{\bei}{\begin{itemize}}
\newcommand{\eei}{\end{itemize}}
\newcommand{\symm}[1]{P_{#1}^{\mathrm{sym}}}
\newcommand{\ges}[0]{P^{\mathrm{ges}}}
\newcommand{\stan}[2]{\varrho_{#1}^{#2}}
\renewcommand{\emph}[1]{\textbf{#1}}
\theoremstyle{plain}
\newtheorem{fakt}{Fact}
\newtheorem{thm}{Theorem}
\newtheorem*{thm*}{Theorem}
\newtheorem{lem}{Lemma}
\newtheorem{cor}[thm]{Corollary}
\theoremstyle{definition}
\theoremstyle{remark}
\titleformat*{\section}{\large\bfseries}
\begin{document}

\title{Constructions of genuinely entangled multipartite states with applications to local hidden variables (LHV) and states (LHS) models}

\author{R. Augusiak$^{1}$, M. Demianowicz$^{2}$, J. Tura$^{3}$
\\[0.5em]
{\it\small $^1$Center for Theoretical Physics, Polish Academy of Sciences, 02-668 Warszawa, Poland}\\
{\it\small $^2$Atomic and Optical Physics Division, Department of Atomic, Molecular and Optical Physics,}\\
{\it\small Faculty of Applied Physics and Mathematics,}\\
{\it\small Gda\'nsk University of Technology, Narutowicza 11/12, 80–-233 Gda\'nsk, Poland}\\
{\it\small $^3$Max-Planck-Institut f\"ur Quantenoptik, D-85748 Garching, Germany}\\
}
\date{}

\maketitle

\begin{abstract}
Building upon the results of [R. Augusiak \textit{et al.}, Phys. Rev. Lett. \textbf{115}, 030404 (2015)] we develop a general approach to the generation of genuinely entangled multipartite states of any number of parties from genuinely entangled states of a fixed number of parties, in particular, the bipartite entangled ones. In our approach, certain isometries whose output subspaces
are either symmetric or genuinely entangled in some multipartite Hilbert spaces are applied to local subsystems of bipartite entangled or multipartite genuinely entangled quantum states. To prove that entanglement of the resulting states is indeed genuine we then introduce  
novel criteria allowing to decide it efficiently.
The construction is then exploited to provide examples of 
multipartite states that are genuinely entangled but not genuinely nonlocal, giving further illustration for the
inequivalence between entanglement and nonlocality in the multiparticle scenario. It is also shown how to construct genuinely entangled states which are unsteerable across certain bipartite cuts.
\end{abstract}

\section{Introduction}

Quantum entanglement is a fascinating feature of composite physical systems. It not only 
shows that quantum physics drastically departs from classical physics, but, over the years, 
has also been turned into a key resource for a whole range of applications such as 
quantum teleportation \cite{teleport} or quantum cryptography \cite{crypto}. It may also give rise to yet another, stronger, feature of composite quantum systems, which is Bell nonlocality \cite{BellNon}. Thus, detection 
and characterization of entanglement in composite quantum systems remains the central, but still unsolved, problem in quantum information theory \cite{GezaOtfriedReview}.

With the development of experimental methods of generating, 
controlling, and manipulating quantum states consisting of 
more than two particles (see, e.g., Ref. \cite{Klempt,Basel}), the problem of entanglement detection and 
characterization has recently gained much importance. In comparison 
to the bipartite case, here this problem 
complicates significantly due to the fact that 
the multiparty scenario supports the whole 
variety of different types of entanglement (see Ref. \cite{DurCirac,AcinPRL2001,TothGuhneNJP2005} for various notions
developed to grasp the richness of entanglement in the multiparty scenario).
It turns out that of the broad variety of types of entanglement the most desirable one
from the application point of view is the so-called genuine multipartite entanglement \cite{GezaMetro2012,HyllusMetro2012,AugusiakMetro2016}. Roughly speaking, 
a state characterized by this type of entanglement is one
in which all particles all entangled with each other; more formally, according to the operational definitions of separability, it is a state which cannot be created from the scratch only using local operations and classical communication by spatially separated parties (so--called LOCC paradigm). In the recent
years we have thus witnessed  a considerable effort towards providing theoretical constructions of 
genuinely entangled states as well as experimental realizations of various interesting states
(see, \np, Ref. \cite{Klempt}).

Our aim with this contribution is to shed light on this  complex problem from
a slightly 
different angle. Building upon our recent work \cite{krotka}, 
we provide a general method for constructing 
genuinely entangled $N$-partite states with any $N$ from 
$K$-partite genuinely entangled states with $K<N$. In the particular, and at the same time the most interesting, case of $K=2$ this method can be used to embed the known classes of entangled bipartite states to the multiparty case in such a way that the resulting state
is genuinely entangled. Our construction consists in applying certain isometries
whose output subspaces are either symmetric or genuinely entangled in multipartite 
Hilbert spaces to the local subsystems of the initial bipartite or $K$-partite density matrices. A key ingredient of our approach are certain entanglement criteria that we derive here which allow one to check whether a given multipartite state is genuinely 
entangled. 

Then, following Ref. \cite{krotka}, we
show how this construction can be harnessed to obtain examples of genuinely entangled $N$-partite states that are not genuinely nonlocal with respect to the Svetlichny and the recent operational definitions of nonlocality \cite{NonlocalityDefinition,NonlocalityDefinition2}.
We thus obtain further examples of multipartite states illustrating the statement
made in Ref. \cite{krotka} that entanglement and nonlocality 
are inequivalent notions in the multipartite scenario (see also Ref. \cite{Hirsch} for GME states with fully local models). Finally, we show 
that these concepts can also be applied to the notion of steering in the 
multipartite case. In particular, we present examples of genuinely entangled states which are unsteerable across certain bipartitions.

The paper is organized as follows. In Sec. \ref{Sec:prelim} we
introduce some background information that is used throughout
the whole paper. In Sec. \ref{Sec:constr} we present our construction
along with entanglement conditions and examples of genuinely entangled
states obtained from the construction. In Sec. \ref{Sec:GMN} we show in 
a detailed way how our approach can be used to design genuinely entangled
states for any number of parties that are not genuinely nonlocal. We then 
discuss an application of these concepts to steering in the multipartite
scenario. We conclude in Section \ref{conclusion}.

\section{Preliminaries } 
\label{Sec:prelim}
This section sets up the scenario and introduces the relevant
notation and terminology from the area of multipartite
entanglement. The notions of (non)locality and (un)steerability in the relevant multipartite scenarios will be discussed in Section \ref{Sec:GMN}.

Consider $N$-parties $A_1,\ldots,A_N:=\bold{A}$ [for small $N$ they will be denoted $A$ (Alice), $B$ (Bob), etc.] holding an $N$--partite quantum state described by the density matrix $\rho_{\bold{A}}$ acting on $\mathcal{H}_{d,N}:=(\mathbb{C}^d)^{\otimes N}$, i.e., $\rho_{\bold{A}}\in \mathcal{B} \left( (\mathbb{C}^d)^{\otimes N} \right)$, where $\mathcal{B}(\mathcal{H})$ denotes the set of bounded linear operators acting on $\mathcal{H}$. 

 A division of the parties into $K$ nonempty disjoint groups $S_i$ (the number of elements in $S_i$ will be denoted by $|S_i|$) such that $S_1\cup\ldots\cup S_K=\bold{A}$ is called a {\it $K$--partition}, denoted by $S_1|S_2|\cdots|S_K$ or simply $\calS_K$; when $K=2$ such division is called a {\it bipartition } and denoted simply by $S|\bar{S}$, where $\bar{S}$ is the complement of $S$ in $\bold{A}$. The set of all $K$--partitions will be denoted by $\mathdollar_K$. 
 
  Any
$K$-partition induces a natural $K$-partition of $\mathcal{H}_{d,N}$ into the
corresponding subsystems, i.e., $\calH=\bigotimes _{p=1}^K \calH_{S_p}$ with $\calH_{S_p}$ being the Hilbert space corresponding to the parties from the set $S_p$. 
%
%

Now, a state is called {\it $K$--separable}
%
%
%
with respect to some $K$-partition $S_1|S_2|\cdots|S_K$ if it is a convex combination 
of pure states from $\mathcal{H}_{d,N}$ that are product with respect to this
$K$-partition, i.e., 
\begin{equation}\label{separable}
\rho_{\calS_K}=\sum_{i}p_i\proj{\psi_{S_1}^i}\ot\proj{\psi_{S_2}^i}\ot\ldots\ot
\proj{\psi_{S_K}^i},
\end{equation}
where $\ket{\psi_{S_j}^i}$ are pure states belonging to the
Hilbert space corresponding to the group $S_j$, i.e., $\ket{\psi_{S_j}^i} \in \calH_{S_j}$ . Then, we simply call an $N$--partite state
$\rho_{\mathbf{A}}$   \textit{$K$-separable}  if it is a convex combination
of quantum states that are $K$-separable across various $K$-partitions, 
which mathematically can be expressed as
\begin{equation}\label{separableK}
\rho_{\mathbf{A}}=\sum_{\calS_K\in\mathdollar_K}q_{\calS_K} \rho_{\calS_K},
\end{equation}
where  $\{ q_{\calS_K}\}$ is a probability distribution, whereas $\rho_{\calS_K}$
are mixed states separable with respect to a particular $K$-partition ${\calS_K}$, i.e., 
they admit the decomposition (\ref{separable}). In the particular case of $K=2$ one speaks about the {\it biseparability} of states. Importantly, any $(K+1)$-separable state is also $K$-separable.

At one end of the above
classification there are states that are $N$-separable ($K=N$ case) -- we call
them \textit{fully separable} because no entanglement whatsoever is present
in such states. At the other end, one has \textit{genuine
multipartite entangled} (GME) states which are those that do not
decompose into any probabilistic mixture of states that admit any
form of separability, even biseparable ones. In short, GME state are those which are not biseparable.

In what follows particular attention will be devoted to 
certain
subspaces. 
The first one is the symmetric subspace of $\mathcal{H}_{d,N}$, denoted
$\mathrm{Sym}((\mathbb{C}^d)^{\ot N})$.
%
%
It consists of vectors that are invariant under permutation of any pair of parties
and its dimension is $\binom{N+d-1}{d-1}$, which for qubits simplifies to $N+1$.
%
%
%
It is worth mentioning that in the particular case of $\mathcal{H}_{2,N}$, $\mathrm{Sym}((\mathbb{C}^2)^{\ot N})$ is spanned by 
the so-called symmetric Dicke states $\ket{D_{N,k}}$ which are normalized symmetrizations of
the simple kets $\ket{\{N,k\}}$ consisting of $k$ ones and $N-k$ zeros, i.e.,
\begin{equation}
\ket{D_{N,k}}=\frac{1}{\sqrt{\binom{N}{k}}}\,\mathrm{Symm}(\ket{0\ldots 01\ldots 1}),
\end{equation}
where $\mathrm{Symm}$ stands for the said symmetrization. For further benefits, let us also notice that $\mathrm{Sym}((\mathbb{C}^2)^{\ot N})$
contains either fully product vectors $\ket{e}^{\otimes N}$ with $\ket{e}\in\mathbbm{C}^2$
or genuinely entangled ones.  A word  about the terminology regarding symmetric subspaces used in the rest of the paper is in order: when we say that a subspace is symmetric we either mean that it is the symmetric subspace,  $\mathrm{Sym}((\mathbb{C}^d)^{\ot N})$, itself or it is simply a subspace of the latter.
%
%

We will then consider subspaces consisting solely of multiparty states that are genuinely multipartite entangled. Such subspaces, which can naturally be called \textit{genuinely entangled subspaces } (GESes), were  first mentioned in Ref.
\cite{Cubitt}, however, since then, they have been barely studied in the literature. 
A well-known example of a genuinely entangled subspace is
the antisymmetric subspace in $\mathcal{H}_{d,N}$ for $d\geq N$. It is spanned by entangled vectors that acquire the minus sign under permutation of any pair of parties and its dimension
is $\binom{d}{N}$. A recent effort towards characterization of such genuinely entangled subspaces has been reported in Ref. \cite{GESy}. 
Let us finally notice that GESes belong to a larger class of subspaces
called \textit{completely entangled subspaces} (CESes) \cite{Cubitt,CES1,ces-bhat,CES2}, which only contain  entangled vectors, however, not necessarily genuinely entangled. Consequently, a GES is also a CES, however, the opposite implication does not hold in general.

Projections onto the subspaces discussed above will be denoted by $\symm{S_i}$
%
%
and, $\ges_{S_i}$, accordingly, with a subscript indicating subsystems they act on.

\section{Generating $N$-partite GME states from $K$-partite GME states with
$K<N$}
\label{Sec:constr}
The aim of this section is to introduce, building upon the results of Ref. \cite{krotka}, 
a simple method of generating $N$-partite genuinely entangled states   from $K$-partite GME states with $K<N$. In the particular case of $K=2$ the method allows for an
extension of bipartite entangled states to GME states of {\it any} number of
parties. In further parts, this method will be employed to construct new examples of genuinely entangled multiparty states that are not genuinely multiparty nonlocal, and also use it to provide examples of GME states that are unsteerable with respect to certain cuts. 

\subsection{Entanglement conditions}

Our construction of GME states  relies heavily on entanglement conditions, which are generalizations of the criterion formulated in \cite{krotka}. For completeness, we recall it along with its proof below. The latter requires a lemma about a certain property of product vectors with symmetric subsystems. The proof is given in  Appendix A.
\begin{lem}\label{LLem1} \cite{krotka}
 Let $\ket{\psi}\in\mathcal{H}_{d,N}$ be a pure state product with respect to
some bipartition $T|\bar{T}$. If $P_{S}^{\mathrm{sym}}\ket{\psi}=\ket{\psi}$
where $S$ is a subset of $\textbf{A}$ having  nontrivial 
overlaps with $T$ and $\bar{T}$, i.e.,  $S\cap T \ne \emptyset$ and  $S\cap \bar{T} \ne \emptyset$, then $\ket{\psi}$ is also product with respect
to the bipartition $S|\bar{S}$.
\end{lem}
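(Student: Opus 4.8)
The plan is to set up coordinates adapted to the two bipartitions and exploit the rigidity of the symmetry constraint $P_S^{\mathrm{sym}}\ket{\psi}=\ket{\psi}$. First I would write $\ket{\psi}=\ket{\alpha}_T\otimes\ket{\beta}_{\bar T}$ and split $S$ into its two nonempty parts $S_1:=S\cap T$ and $S_2:=S\cap\bar T$; then the parties of $\mathbf{A}$ partition into four blocks: $S_1$, $S_2$, $T\setminus S_1$, $\bar T\setminus S_2$. Expanding $\ket{\alpha}$ over a product basis for $\calH_{S_1}\otimes\calH_{T\setminus S_1}$ and $\ket{\beta}$ over one for $\calH_{S_2}\otimes\calH_{\bar T\setminus S_2}$, we get $\ket{\psi}=\sum_{i,j,k,l} c_{ik}\,d_{jl}\,\ket{i}_{S_1}\ket{k}_{T\setminus S_1}\ket{j}_{S_2}\ket{l}_{\bar T\setminus S_2}$, where $c$ and $d$ are the coefficient matrices of $\ket{\alpha}$ and $\ket{\beta}$. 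The goal is to prove $\ket{\psi}$ factorizes across $S=S_1S_2$ versus $\bar S=(T\setminus S_1)(\bar T\setminus S_2)$.

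Next I would extract what permutation-symmetry on $S$ gives us. Since $\ket{\psi}$ is invariant under swapping any pair of parties both lying in $S$, it is in particular invariant under swapping a party in $S_1$ with a party in $S_2$. Pick one party $a\in S_1$ and one party $b\in S_2$ and swap them: this interchanges one "slot" of the $c$-matrix index with one slot of the $d$-matrix index. Writing the invariance condition component-wise, the product structure $c_{ik}d_{jl}$ must be unchanged under exchanging the $a$-component of $i$ with the $b$-component of $j$. The key consequence I expect is that, after fixing the remaining indices, the resulting two-variable coefficient array must have rank one as a function of (the $a$-part of $S_1$) versus (the rest), and likewise on the $S_2$ side — more precisely, the swap symmetry forces $c$ and $d$ to be "aligned" so that the joint coefficient, viewed as a matrix with row index in $\calH_S$ and column index in $\calH_{\bar S}$, has rank one. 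Concretely: the swap-invariance equations say $c_{ik}d_{jl}=c_{i'k}d_{j'l}$ whenever $(i,j)$ and $(i',j')$ differ only by exchanging the $a$-slot with the $b$-slot; iterating and combining these relations pins down all coefficients in terms of a single rank-one product over the $S|\bar S$ split.

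The main obstacle, and the step I would spend the most care on, is the bookkeeping that turns the single transposition-invariance $(a\leftrightarrow b)$ into full factorization across $S|\bar S$: one must argue that one well-chosen transposition between $S_1$ and $S_2$ already suffices (together with the product structure across $T|\bar T$) — indeed, the hypothesis only guarantees symmetry \emph{within} $S$, not between $S$ and $\bar S$, so we cannot symmetrize freely. The clean way is to consider the bipartite splitting $S_1|(T\setminus S_1)$ of the state $\ket{\alpha}$ and $S_2|(\bar T\setminus S_2)$ of $\ket{\beta}$, note that the swap $a\leftrightarrow b$ maps the product state $\ket{\alpha}\otimes\ket{\beta}$ to another product state across $T|\bar T$ that must equal $\ket{\psi}$, and deduce a functional equation forcing each of $\ket{\alpha}$, $\ket{\beta}$ to be product across $S_1|(T\setminus S_1)$ and $S_2|(\bar T\setminus S_2)$ respectively — after which $\ket{\psi}=\ket{\phi}_{S_1}\otimes\ket{\phi'}_{S_2}\otimes\ket{\chi}_{T\setminus S_1}\otimes\ket{\chi'}_{\bar T\setminus S_2}$ regroups immediately as $\ket{\cdot}_S\otimes\ket{\cdot}_{\bar S}$. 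I would carry this out by a Schmidt-decomposition argument: write $\ket{\alpha}=\sum_\mu \sqrt{p_\mu}\ket{u_\mu}_{S_1}\ket{v_\mu}_{T\setminus S_1}$ and similarly for $\ket{\beta}$, impose the transposition symmetry, and show the Schmidt ranks must both be $1$. That rank-collapse is the crux; the rest is regrouping tensor factors.
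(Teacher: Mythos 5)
Your overall strategy --- trading the paper's black-box appeal to the structure of the symmetric subspace for a direct, elementary rank-collapse argument from transposition invariance --- is viable, but the step you yourself flag as the crux is carried out incorrectly: a single transposition $a\leftrightarrow b$ with $a\in S_1=S\cap T$, $b\in S_2=S\cap\bar{T}$ does \emph{not} force $\ket{\alpha}$ to be product across $S_1|(T\setminus S_1)$. What it does force (by exactly the reduced-density-matrix/Schmidt argument you sketch: $\ket{\psi}$ is product across both $T|\bar{T}$ and the swapped bipartition, hence across their common refinement) is only that party $a$ factors out of $\ket{\alpha}$ and party $b$ factors out of $\ket{\beta}$; any other party $a'\in S_1$ is untouched and may remain entangled with $T\setminus S_1$. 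Concretely, take five parties with $T=\{1,2,3\}$, $\bar{T}=\{4,5\}$, $S=\{1,2,4\}$ and $\ket{\psi}=\ket{0}_1\otimes\ket{\Phi^+}_{23}\otimes\ket{0}_4\otimes\ket{0}_5$ with $\ket{\Phi^+}$ entangled: this state is product across $T|\bar{T}$, is invariant under the swap $1\leftrightarrow 4$, and satisfies all your coefficient relations $c_{ik}d_{jl}=c_{i'k}d_{j'l}$ for that swap, yet it is not product across $S|\bar{S}=\{1,2,4\}|\{3,5\}$. (It is of course not invariant under the full $P_S^{\mathrm{sym}}$ --- which is precisely the information your one-transposition argument fails to use.)

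The fix is cheap: run your single-transposition argument once for \emph{each} party $c\in S$, pairing $c$ with some party of $S$ lying on the other side of $T|\bar{T}$ (always possible since $S_1,S_2\neq\emptyset$); each run shows $\ket{\psi}$ is product across $\{c\}|(\mathbf{A}\setminus\{c\})$, and combining these for all $c\in S$ yields $\ket{\psi}=\bigotimes_{c\in S}\ket{e_c}\otimes\ket{\chi}_{\bar{S}}$, which is the desired $S|\bar{S}$ factorization (and in fact recovers the stronger conclusion, also reached in the paper, that $\ket{\psi}$ is fully product on $S$ with all $\ket{e_c}$ equal up to phase). For comparison, the paper's proof likewise Schmidt-decomposes $\ket{\psi_T}$ and $\ket{\phi_{\bar{T}}}$ across the cuts induced by $S$, but then closes the argument by citing the known structural result that a vector in the symmetric subspace which is product across some bipartition must equal $\ket{e}^{\otimes|S|}$; your route, once repaired as above, replaces that citation with a self-contained elementary argument, which is a genuine (if modest) gain.
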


We can now give the announced condition from \cite{krotka}.


%
%
%
%
%
%

\begin{fakt}\label{fact1} \cite{krotka}
Let $\rho_{\bold{A}}$ be an $N$-partite state acting on $\mathcal{H}_{d,N}$
such that with respect to some $K$-partition
$S_1|S_2|\cdots|S_K$, its subsystems corresponding to $S_k$'s are
supported on the symmetric subspaces, i.e., 
\beq\label{Ksym}
\symm{S_1}\ot\cdots\ot\symm{S_K}\rho_{\bold{A}}\symm{S_1}\ot\cdots\ot\symm{S_K}=\rho_{\bold{A}}.
\eeq
Then, if $\rho_{\bold{A}}$ is not
GME, it  takes the biseparable form 
\begin{equation}
\label{dekompozycja}
 \rho_{\bold{A}}=\sum_{T|\bar{T}}p_{T|\bar{T}}\rho_{T|\bar{T}}^{\mathrm{sep}},
\end{equation}
where the sum runs over all bipartitions $T|\bar{T}$ for which  $T$ and $\bar{T}$ are unions of the sets $S_k$ and $\rho_{T|\bar{T}}^{\mathrm{sep}}$'s are separable across $T|\bar{T}$.
\end{fakt}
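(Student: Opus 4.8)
The plan is to reduce the statement to one about pure product vectors, constrain those vectors using the symmetric support, and then invoke Lemma~\ref{LLem1} to align the relevant bipartitions with the $K$-partition $S_1|S_2|\cdots|S_K$.

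First I would use the hypothesis that $\rho_{\mathbf{A}}$ is not GME: by definition this means $\rho_{\mathbf{A}}$ is biseparable, i.e. a convex combination of states each separable across some bipartition, and expanding every such component into its own pure-state ensemble lets us write $\rho_{\mathbf{A}}=\sum_i p_i\proj{\psi_i}$ with each $\ket{\psi_i}$ a unit vector of $\mathcal{H}_{d,N}$ that is product with respect to some bipartition $T_i|\bar{T}_i$.

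Next I would exploit the symmetric support. Writing $P:=\symm{S_1}\otimes\cdots\otimes\symm{S_K}$, the condition \eqref{Ksym} says $P\rho_{\mathbf{A}}P=\rho_{\mathbf{A}}$; feeding in the ensemble above and taking the trace gives $1=\sum_i p_i\|P\ket{\psi_i}\|^2$, and since each $\|P\ket{\psi_i}\|\le 1$ while $\sum_i p_i=1$, every term with $p_i>0$ must satisfy $\|P\ket{\psi_i}\|=1$, hence $P\ket{\psi_i}=\ket{\psi_i}$. Because the factors $\symm{S_k}$ act on disjoint subsystems, this is the same as $\symm{S_k}\ket{\psi_i}=\ket{\psi_i}$ for every $k$, i.e. each $\ket{\psi_i}$ has all of its $S_k$-subsystems in the symmetric subspace.

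Finally I would run Lemma~\ref{LLem1} on each $\ket{\psi_i}$ separately and regroup. If the bipartition $T_i|\bar{T}_i$ splits some group $S_k$ (so $S_k\cap T_i\ne\emptyset$ and $S_k\cap\bar{T}_i\ne\emptyset$), then Lemma~\ref{LLem1} applies with $S=S_k$ — its hypothesis $\symm{S_k}\ket{\psi_i}=\ket{\psi_i}$ holds by the previous step — and tells us that $\ket{\psi_i}$ is in fact product across $S_k|\bar{S}_k$, a bipartition whose two parts are unions of the $S_k$'s. If instead no $S_k$ is split, then every $S_k$ sits entirely inside $T_i$ or entirely inside $\bar{T}_i$, so $T_i$ and $\bar{T}_i$ are already unions of the $S_k$'s. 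In both cases $\ket{\psi_i}$ is product across some bipartition $U_i|\bar{U}_i$ into unions of the $S_k$'s; collecting the terms of the ensemble according to the value of $U_i$ (identifying $U_i$ with $\bar{U}_i$) then yields the claimed form \eqref{dekompozycja}, with $p_{T|\bar{T}}=\sum_{i:\,U_i=T}p_i$ and $\rho^{\mathrm{sep}}_{T|\bar{T}}$ the corresponding normalized mixture of vectors product across $T|\bar{T}$, hence separable across it. I expect the genuinely substantive step to be the second one: biseparability is an a priori unstructured mixture over \emph{all} bipartitions, and it is only the purity/trace argument together with \eqref{Ksym} that forces the pure components to be compatible with the symmetric support, after which Lemma~\ref{LLem1} does the geometric work of snapping each cut onto the $K$-partition; the final regrouping is bookkeeping.
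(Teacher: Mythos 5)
Your proof is correct and follows essentially the same route as the paper's: decompose the biseparable state into pure components product across some bipartition, use the symmetric support together with Lemma~\ref{LLem1} to force each such cut onto a union of the $S_k$'s, and regroup. If anything your version is slightly tighter than the paper's contradiction argument: you justify explicitly, via the trace identity $1=\sum_i p_i\|P\ket{\psi_i}\|^2$, that every pure component inherits the condition $P\ket{\psi_i}=\ket{\psi_i}$ (a step the paper leaves implicit), and you apply Lemma~\ref{LLem1} directly with $S=S_k$ to conclude productness across $S_k|\bar{S}_k$, rather than routing through the intersection cuts $(T\cap S_{i_m})|[T\setminus(T\cap S_{i_m})]$ as the paper does.
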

\begin{proof}
%
%
Since $\rho$ is not GME it admits the biseparable ($K=2$) decomposition (\ref{separableK}).
Assume then that in this decomposition there is a biseparable state $\varrho_{T|\bar{T}}$
where $T$, and, in consequence, also $\bar{T}$, are \textit{not} unions of the sets $S_k$. Obviously, such
$\varrho_{T|\bar{T}}$  admits the decomposition (\ref{separable}) in which there only appear
pure states that are product with respect to $T|\bar{T}$, but not with respect
to sets that are unions of $S_k$, by the assumption above. 

Let $\ket{\psi_T}\ket{\phi_{\bar{T}}}$ be
any of such product states. Now, since $T$ is not a union of $S_k$'s, 
there exists at least one set among the latter whose overlap with both $T$ and $\bar{T}$ is
nonempty. 
Further, for at least one such set, say $S_{i_m}$, 
either the state $\ket{\psi_T}$ is not product with respect to 
the bipartition $(T\cap S_{i_m})|[T\setminus (T\cap S_{i_m})]$ or 
$\ket{\phi_{\bar{T}}}$ is not product with respect to 
the bipartition $(\bar{T}\cap S_{i_m})|[\bar{T}\setminus (\bar{T}\cap S_{i_m})]$, as otherwise $\ket{\psi_T}\ket{\phi_{\bar{T}}}$ would be
product with respect to unions of the sets $S_k$. 
This, via Lemma \ref{LLem1},
implies that the condition $P^{\mathrm{sym}}_{S_{i_m}}\ket{\psi}=\ket{\psi}$
cannot be satisfied, which, in turn, contradicts (\ref{Ksym}). This completes the proof.
%
%
\end{proof}

Two remarks are in order here. First, if $K=2$, i.e., the state $\rho_{\textbf{A}}$ can be regarded as  having two
subsystems $S$ and $\bar{S}$ defined on the symmetric subspaces of the
corresponding Hilbert spaces, then Eq. (\ref{dekompozycja}) simplifies to 
\begin{equation}
 \rho_{\textbf{A}}=\sum_{i}p_i\varrho^i_S\ot\bar{\varrho}^i_{\bar{S}},
\end{equation}
that is, the state $\rho_{\textbf{A}}$ is simply separable across $S|\bar{S}$. Second, a
straightforward corollary of this fact is a condition for a multipartite state to 
be genuinely entangled.
\begin{cor} \cite{krotka}
If $\rho$ satisfies (\ref{Ksym})
and does not admit the decomposition (\ref{dekompozycja}), then it is genuinely multipartite entangled. In particular, for $K=2$, if $\rho$ is entangled across $S|\bar{S}$, then it is GME.
\end{cor}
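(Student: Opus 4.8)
The plan is to obtain this corollary directly from Fact \ref{fact1} by contraposition, so that no genuinely new argument is required. First I would observe that Fact \ref{fact1} has precisely the logical form ``if $\rho$ satisfies (\ref{Ksym}) and is not GME, then $\rho$ admits the decomposition (\ref{dekompozycja})''. Keeping the standing hypothesis (\ref{Ksym}) fixed, the contrapositive of the remaining implication reads ``if $\rho$ satisfies (\ref{Ksym}) and does not admit (\ref{dekompozycja}), then $\rho$ is GME'', which is exactly the first assertion of the corollary. Hence the first sentence is nothing but a restatement of Fact \ref{fact1}.

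For the special case $K=2$, I would invoke the first remark following Fact \ref{fact1}: when $K=2$ the only bipartition whose two blocks are unions of the groups $S_1=S$ and $S_2=\bar S$ is $S|\bar S$ itself, so the sum in (\ref{dekompozycja}) reduces to a single term and the decomposition (\ref{dekompozycja}) becomes equivalent to $\rho$ being separable across the cut $S|\bar S$. Consequently ``$\rho$ does not admit (\ref{dekompozycja})'' is synonymous with ``$\rho$ is entangled across $S|\bar S$'', and substituting this into the first part yields the claim that entanglement across $S|\bar S$ implies genuine multipartite entanglement.

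The only point that deserves to be spelled out with a little care — and the closest thing here to a real obstacle — is this equivalence, in the $K=2$ case, between failing to admit (\ref{dekompozycja}) and being entangled across $S|\bar S$; it rests solely on the combinatorial observation that the index set of the sum in (\ref{dekompozycja}) is then a singleton, together with the definition of biseparability recalled in Section \ref{Sec:prelim}. Everything else is immediate, so the proof will amount to only a couple of lines.
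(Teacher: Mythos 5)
Your proposal is correct and coincides with the paper's own treatment: the corollary is stated there as a ``straightforward corollary'' of Fact \ref{fact1}, i.e.\ exactly the contrapositive you describe, and the $K=2$ reduction of (\ref{dekompozycja}) to separability across $S|\bar{S}$ is precisely the first remark following Fact \ref{fact1}. Nothing further is needed.
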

This gives a simple method of
checking whether a state is genuinely entangled: once we know that $\rho$ has the symmetry (\ref{Ksym}), one has to check only some of the bipartitions to confirm that $\rho$ is GME. The existence of this symmetry may be given in advance or verified with a direct computation.

We now move on to derive generalizations of the above fact. For this purpose, let us notice that the key property of the symmetric subspaces that enabled us to prove Fact \ref{fact1} is that they only consist of either fully product vectors or multipartite states that are genuinely entangled (see, e.g., Refs. \cite{AnnPhysLew,JaJordiPRA}). 
This observation can be further exploited to extend Fact \ref{fact1} to other types of subspaces.
Precisely, as we demonstrate below, the symmetric subspaces in (\ref{Ksym}) can be replaced by arbitrary genuinely entangled subspaces of $\mathcal{H}_{S_i}$.

For pedagogical reasons we begin with the case of $K=2$ and consider 
a bipartition $S|\bar{S}$ of the Hilbert space $\mathcal{H}_{d,N}=\calH_S \otimes \calH_{\bar{S}}$ ($|S|+|\bar{S}|=N$). Let us then distinguish two genuinely entangled subspaces
of the Hilbert spaces  corresponding to  the groups $S$ and $\bar{S}$, 
$\mathcal{V}_{\mathrm{ges}}\subset \mathcal{H}_{d,|S|}$ and $\mathcal{V}_{\mathrm{ges}}'\subset\mathcal{H}_{d,|\bar{S}|}$,
with the projections $P_{\bar{S}}^{\mathrm{ges}}$ and  $P_{\bar{S}}^{\mathrm{ges}}$, respectively. The following holds true.
%
\begin{fakt}\label{fact2}
 Let $\rho_{\textbf{A}}$ be an $N$-partite state acting on $\mathcal{H}_{d,N}$ such that for
some bipartition $S|\bar{S}$ the condition 
\begin{equation}\label{2GES}
 P_{S}^{\mathrm{ges}}\ot
P_{\bar{S}}^{\mathrm{ges}}\rho_{\textbf{A}} P_{S}^{\mathrm{ges}}\ot
P_{\bar{S}}^{\mathrm{ges}}=\rho_{\textbf{A}}.
\end{equation}
If $\rho_{\textbf{A}}$ is not GME, then it is separable across the bipartition $S|\bar{S}$.
\end{fakt}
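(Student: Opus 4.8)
The plan is to follow the proof of Fact \ref{fact1} almost verbatim, replacing the use of Lemma \ref{LLem1} (which encoded the relevant property of symmetric subspaces) by the defining property of genuinely entangled subspaces: a density matrix whose support lies in a GES cannot be separable across \emph{any} nontrivial bipartition of its parties. Indeed, in any separable decomposition of such a state every pure product vector belongs to its support, hence to the GES, hence is GME --- a contradiction. This single observation will play exactly the role of Lemma \ref{LLem1}.

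First I would record the following GES-analogue of Lemma \ref{LLem1}: if $\ket{\psi}\in\mathcal{H}_{d,N}$ is product with respect to a bipartition $T|\bar{T}$ and $(P_{X}^{\mathrm{ges}}\otimes\jedynka)\ket{\psi}=\ket{\psi}$ for some subset $X\subseteq\textbf{A}$ carrying a genuinely entangled subspace $\mathcal{V}_{\mathrm{ges}}$, then $X$ cannot overlap both $T$ and $\bar{T}$, i.e.\ $X\subseteq T$ or $X\subseteq\bar{T}$. The argument is immediate: the condition $(P_{X}^{\mathrm{ges}}\otimes\jedynka)\ket{\psi}=\ket{\psi}$ forces the reduced state $\rho_X:=\mathrm{tr}_{\bar{X}}\proj{\psi}$ to have range inside $\mathcal{V}_{\mathrm{ges}}$ (an elementary positivity fact); on the other hand, productness of $\ket{\psi}$ across $T|\bar{T}$ makes $\rho_X$ a tensor product $\sigma_{X\cap T}\otimes\tau_{X\cap\bar{T}}$, whose eigenvectors are product across the cut $(X\cap T)|(X\cap\bar{T})$ and lie in $\mathrm{supp}(\rho_X)\subseteq\mathcal{V}_{\mathrm{ges}}$; if both intersections were nonempty this would exhibit a biseparable (hence non-GME) vector inside a GES, a contradiction.

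The main argument then runs as in Fact \ref{fact1}. Since $\rho_{\textbf{A}}$ is not GME it is biseparable, $\rho_{\textbf{A}}=\sum_{T|\bar{T}}p_{T|\bar{T}}\rho_{T|\bar{T}}^{\mathrm{sep}}$ with $p_{T|\bar{T}}>0$ and each $\rho_{T|\bar{T}}^{\mathrm{sep}}$ separable across $T|\bar{T}$. Because the support of a convex combination of positive operators with positive coefficients is the span of the supports of the summands, condition \eqref{2GES} forces $\mathrm{supp}(\rho_{T|\bar{T}}^{\mathrm{sep}})\subseteq\mathcal{V}_{\mathrm{ges}}\otimes\mathcal{V}_{\mathrm{ges}}'$, and hence every pure product vector $\ket{\psi}=\ket{\psi_T}\ket{\phi_{\bar{T}}}$ appearing in its decomposition \eqref{separable} satisfies $(P_{S}^{\mathrm{ges}}\otimes\jedynka)\ket{\psi}=\ket{\psi}$ and $(\jedynka\otimes P_{\bar{S}}^{\mathrm{ges}})\ket{\psi}=\ket{\psi}$. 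Applying the lemma above with $X=S$ and then with $X=\bar{S}$ yields ``$S\subseteq T$ or $S\subseteq\bar{T}$'' and ``$\bar{S}\subseteq T$ or $\bar{S}\subseteq\bar{T}$''. Since $S\cup\bar{S}=\textbf{A}$ while $T$ and $\bar{T}$ are both nonempty, the only consistent combinations are $T=S$ (so $\bar{T}=\bar{S}$) or $T=\bar{S}$ (so $\bar{T}=S$); in either case $T|\bar{T}$ is the bipartition $S|\bar{S}$. Consequently the biseparable decomposition contains only the term associated with $S|\bar{S}$, so $\rho_{\textbf{A}}=\rho_{S|\bar{S}}^{\mathrm{sep}}$ is separable across $S|\bar{S}$.

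The routine ingredients are the two standard positivity facts used above (the support of $\sum_i q_i\sigma_i$ with $q_i>0$, $\sigma_i\ge0$ is $\mathrm{span}\bigcup_i\mathrm{supp}(\sigma_i)$, and $(P\otimes\jedynka)\ket{\psi}=\ket{\psi}$ forces $\mathrm{tr}_{\bar{X}}\proj{\psi}$ to have range inside $\mathrm{range}(P)$). The only place requiring genuine care is the final case analysis --- verifying that the two dichotomies coming from $S$ and from $\bar{S}$, together with nonemptiness of $T$ and $\bar{T}$, really pin down the bipartition --- which I expect to be the ``main obstacle'' although it is essentially bookkeeping. One should also note the implicit hypothesis $2\le|S|\le N-2$ needed for $\mathcal{V}_{\mathrm{ges}}$ and $\mathcal{V}_{\mathrm{ges}}'$ to exist as genuinely entangled subspaces; if, say, $|S|=1$ one simply drops the projector on $S$ and repeats the argument using $P_{\bar{S}}^{\mathrm{ges}}$ alone.
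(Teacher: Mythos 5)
Your proof is correct and follows essentially the same route as the paper: the paper likewise takes a product vector across a deviant cut $T|\bar T$ in the range of a biseparable term, traces out $\bar S$, and derives a contradiction from the reduced state on $S$ being simultaneously separable across $(T\cap S)|(\bar T\cap S)$ and supported on a genuinely entangled subspace. You have merely packaged this observation as an explicit lemma and carried out the final case analysis (and the $|S|=1$ boundary case) more carefully than the paper does.
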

\begin{proof}The fact that $\rho_{\textbf{A}}$ is not GME implies that it can be written as a convex 
combination of states that are biseparable across various bipartitions. 
Assume then that in this decomposition there is a state that is biseparable
across a bipartition $T|\bar{T}$ different than $S|\bar{S}$ (and that it is not separable across $S|\bar{S}$). It is not difficult to see that the range of this state contains product vectors $\ket{\psi_{T|\bar{T}}}=\ket{\phi_T}\ket{\varphi_{\bar{T}}}$ which obey the symmetry (\ref{2GES}), meaning that 
\begin{equation}\label{conditions1}
P_{X}^{\mathrm{ges}}
\ket{\psi_{T|\bar{T}}}=\ket{\psi_{T|\bar{T}}} \qquad (X=S,\bar{S}).
\end{equation}
Let us now use the fact that the separability cut of $\ket{\psi_{T|\bar{T}}}$
belongs to $S$ or $\bar{S}$; otherwise these must be the same bipartitions. For concretness we
assume the latter to be $S$. Then, by tracing out the $\bar{S}$ part of 
$\ket{\phi_T}\ket{\varphi_{\bar{T}}}$ we obtain a mixed state $\varrho_S$ acting on the Hilbert space corresponding to the group $S$, which is biseparable. This contradicts 
the fact that, according to (\ref{conditions1}), $P_{S}^{\mathrm{ges}}\varrho_S P_{S}^{\mathrm{ges}}=\varrho_S$, meaning that $\varrho_S$ must be genuinely multipartite entangled.
\end{proof}

We thus obtain another separability condition:
\begin{cor} If a multipartite $\rho_{\textbf{A}}$
is entangled across a bipartition $S|\bar{S}$ and satisfies 
(\ref{2GES}), then it must be GME.
\end{cor}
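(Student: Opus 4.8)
The plan is to obtain this statement as the immediate contrapositive of Fact \ref{fact2}, so essentially no new work is required. Recall that a multipartite state is GME precisely when it is \emph{not} biseparable, i.e., when it does not admit any convex decomposition into states each separable across some bipartition. Fact \ref{fact2} asserts more than this: under the symmetry hypothesis (\ref{2GES}), if $\rho_{\textbf{A}}$ fails to be GME then it is in fact separable across the \emph{specific} cut $S|\bar{S}$.

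So first I would assume, for contradiction, that $\rho_{\textbf{A}}$ satisfies (\ref{2GES}), is entangled across $S|\bar{S}$, and yet is not GME. Feeding the last two of these hypotheses into Fact \ref{fact2} yields that $\rho_{\textbf{A}}$ is separable with respect to $S|\bar{S}$, which directly contradicts the assumption that it is entangled across that bipartition. Hence $\rho_{\textbf{A}}$ must be GME. The only ``step'' is the invocation of Fact \ref{fact2}, and there is no genuine obstacle: all the substantive content — that any biseparable component across a cut $T|\bar{T}\neq S|\bar{S}$ has a product vector in its range obeying (\ref{conditions1}), and that tracing out one side of such a vector would produce a biseparable state on a genuinely entangled subspace, which is impossible — has already been carried out in the proof of Fact \ref{fact2}.

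If a self-contained argument were desired, one would simply unfold that reasoning directly: write the (assumed non-GME) $\rho_{\textbf{A}}$ as a mixture of states biseparable across various bipartitions, discard every component whose cut differs from $S|\bar{S}$ using the contradiction with (\ref{2GES}), conclude that only components separable across $S|\bar{S}$ survive, and hence that $\rho_{\textbf{A}}$ is separable across $S|\bar{S}$ — contradicting entanglement across that cut. Since Fact \ref{fact2} already records exactly this implication, however, the corollary is nothing more than its logical negation, and I would present it as such.
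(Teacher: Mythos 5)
Your proposal is correct and matches the paper's treatment: the corollary is stated there without proof precisely because it is the immediate contrapositive of Fact \ref{fact2}, which is exactly how you derive it. No gaps.
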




It is not difficult to realize that Fact \ref{fact2} can also be formulated more generally, just as Fact \ref{fact1}, for any $K$-partition with arbitrary $K$.

Moreover, Fact \ref{fact1} and Fact \ref{fact2} with its discussed extension can be combined into a more general statement, in which  for every subset $S_i$ of the parties, the related subspace of the corresponding Hilbert space can either be genuinely entangled or symmetric. Namely, the following holds.

%
%
%
%
%
%
\begin{fakt}\label{fact4}
 Let $\rho_{\textbf{A}}$ be an $N$-partite state acting on $\mathcal{H}_{d,N}$ such that for
some $K$-partition $S_1|\ldots|S_K$ of the parties the following
condition holds true 
\begin{equation}\label{KGES}
 P_{S_1}\ot\ldots\ot
P_{S_K}\,\rho_{\textbf{A}}\, P_{S_1}\ot\ldots\ot
P_{S_K}=\rho_{\textbf{A}},
\end{equation}
where $P_{S_i}$ stands for a projector onto 
a symmetric or genuinely entangled subspace
of the Hilbert space corresponding to the group $S_i$.

Then, if $\rho_{\textbf{A}}$ is not GME, it can be written as
in Eq. (\ref{dekompozycja}) with $T$ being sums of the sets $S_i$.
\end{fakt}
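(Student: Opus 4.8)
The plan is to run, side by side, the arguments that established Fact~\ref{fact1} and Fact~\ref{fact2}, deciding for each group $S_i$ which of the two to invoke according to whether $P_{S_i}$ projects onto a symmetric or a genuinely entangled subspace. Since $\rho_{\textbf{A}}$ is not GME it admits the biseparable ($K=2$) decomposition (\ref{separableK}), i.e.\ a convex combination of states each separable across some bipartition. Because (\ref{KGES}) forces $\mathrm{supp}(\rho_{\textbf{A}})$ to lie in the range of $P_{S_1}\ot\cdots\ot P_{S_K}$, the support of every summand in this decomposition satisfies the same symmetry. It therefore suffices to prove the following local claim: if $\ket{\psi}\in\mathcal{H}_{d,N}$ obeys $P_{S_i}\ket{\psi}=\ket{\psi}$ for all $i$ and is product across some bipartition $T|\bar{T}$, then $\ket{\psi}$ is also product across some bipartition $T'|\bar{T}'$ in which $T'$ is a union of the sets $S_i$. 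Granting the claim, each component $\varrho_{T|\bar{T}}=\sum_j p_j\proj{\psi_j}$ (with $\ket{\psi_j}$ product across $T|\bar{T}$ and lying in $\mathrm{supp}(\rho_{\textbf{A}})$) is a mixture of pure states product across such ``good'' bipartitions, and collecting everything yields (\ref{dekompozycja}).

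To prove the claim, assume $T$ is not itself a union of the $S_i$'s (otherwise there is nothing to do), so at least one group straddles the cut, i.e.\ $S_i\cap T\neq\emptyset$ and $S_i\cap\bar{T}\neq\emptyset$. First I dispose of the genuinely entangled case: if a straddling $S_i$ has $P_{S_i}$ projecting onto a GES $\mathcal{V}_{S_i}$, then, as in Fact~\ref{fact2}, $P_{S_i}\ket{\psi}=\ket{\psi}$ implies that the reduced state $\mathrm{Tr}_{\bar{S}_i}\proj{\psi}$ is supported on $\mathcal{V}_{S_i}$, hence has no pure vector in its range that is product across any bipartition of $\mathcal{H}_{S_i}$; but writing $\ket{\psi}=\ket{\phi_T}\ket{\varphi_{\bar{T}}}$ and tracing out $\bar{S}_i$ produces a product (in particular separable) state across the nontrivial bipartition $(S_i\cap T)|(S_i\cap\bar{T})$ of $\mathcal{H}_{S_i}$, whose range \emph{does} contain such vectors --- a contradiction. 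Hence no straddling group can be of genuinely entangled type; every straddling $S_i$ must be symmetric.

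For a straddling symmetric $S_i$ I use Lemma~\ref{LLem1}: $\ket{\psi}$ is product across $T|\bar{T}$ and $P_{S_i}\ket{\psi}=\ket{\psi}$ with $S_i$ overlapping both $T$ and $\bar{T}$, so $\ket{\psi}$ is product across $S_i|\bar{S}_i$ as well; since a pure state that is product across two bipartitions is product across their common refinement, $\ket{\phi_T}$ is product across $(T\cap S_i)|(T\setminus S_i)$ and $\ket{\varphi_{\bar{T}}}$ across $(\bar{T}\cap S_i)|(\bar{T}\setminus S_i)$. Applying this simultaneously to every straddling $S_i$ shows $\ket{\psi}$ is product across the partition whose cells are the nonempty sets of the form $T\cap S_i$ and $\bar{T}\cap S_i$; this partition refines $S_1|\cdots|S_K$, so coarsening it proves $\ket{\psi}$ is product across, say, $S_1|(S_2\cup\cdots\cup S_K)$, a bipartition into unions of the $S_i$'s. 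This proves the claim and hence the statement.

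The two case analyses are essentially verbatim copies of the proofs of Fact~\ref{fact1} and Fact~\ref{fact2}, so I expect the only genuine work to be the bookkeeping of the last paragraph: one must state carefully the (standard) fact that the product structure of a pure state is closed under common refinements of bipartitions, and then verify that merging the cells $T\cap S_i$ with $\bar{T}\cap S_i$ really recovers a coarsening into whole groups $S_i$, so that a bona fide union-bipartition does appear. A minor point to settle at the outset is the reduction from ``$\rho_{\textbf{A}}$ not GME'' to a decomposition each of whose summands is separable across a good bipartition, but this is exactly the local claim applied termwise.
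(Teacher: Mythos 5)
Your proposal is correct and follows essentially the same route as the paper's proof: reduce to pure product vectors in the support (which inherit the symmetry \eqref{KGES}), rule out straddling groups of GES type via the reduced state on $S_i$ being both product and supported on a GES, and handle straddling symmetric groups via Lemma~\ref{LLem1}. The only difference is presentational --- you argue directly and make explicit the common-refinement/coarsening step that the paper leaves implicit in its contrapositive ("as otherwise $\ket{\psi_T}\ket{\phi_{\bar T}}$ would be product with respect to unions of the sets $S_k$"), which is a welcome clarification rather than a deviation.
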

\begin{proof}
%
%
The proof follows similar lines of reasoning as that of Fact \ref{fact2}. Since $\rho_{\textbf{A}}$ is not GME it admits the decomposition into a convex combination of biseparable states (\ref{dekompozycja}).
Assume then that in this decomposition there is a biseparable state $\rho_{T|\bar{T}}$ for which 
$T$ is \textit{not} a union of some of the sets $S_1,\ldots,S_K$. Clearly, this  $\rho_{T|\bar{T}}$ can be written as the following convex combination 
\begin{equation}
\rho_T=\sum_{i}q_i \proj{\psi_{T}^i}\otimes \proj{\phi_{\bar{T}}^i}
\end{equation}
in which at least one of the pure states 
$\ket{\psi_T^i}\ket{\phi_{\bar{T}}^i}$ is not product with respect to unions of the sets $S_k$.

Let us then consider one such  state and denote it simply by $\ket{\psi_T}\ket{\phi_{\bar{T}}}$. Due to the fact that $T$ is not a union of $S_k$'s there are sets 
$S_{i_1},\ldots,S_{i_l}$ ($l \ge 1$; it is possible that there is only one such state),  which have nonempty overlaps with both $T$ and $\overline{T}$. 

Let us denote by $\mathcal{V}_i$'s subspaces with respective projection $P_{S_i}$'s.
We now need to consider two cases: 
(i) at least one of the subspaces $\mathcal{V}_{i_1},\ldots,\mathcal{V}_{i_l}$ is genuinely entangled, 
(ii) all of them are symmetric.

In the case (i), the fact that one of $\mathcal{V}_{i_1},\ldots,\mathcal{V}_{i_l}$, say $\mathcal{V}_{i_m}$, is genuinely entangled contradicts 
the condition (\ref{KGES}). This is because the marginal density matrix $\rho_{S_{i_m}}$ of 
$\ket{\psi_T}\ket{\phi_{\bar{T}}}$ corresponding to the subset $S_{i_m}$ is
certainly separable across the bipartition $[T\cap S_{i_m}]|[\bar{T}\cap S_{i_m}]$, while 
the condition (\ref{KGES}) implies that $P_{S_{i_m}}\rho_{S_{i_m}}P_{S_{i_m}}=\rho_{S_{i_m}}$, meaning that it must be supported on a genuinely entangled subspace.

As to the case (ii), we notice that among the sets $S_{i_1},\ldots,S_{i_l}$
there must be at least one, call it $S_{i_m}$, for which
either the state $\ket{\psi_T}$ is not product with respect to the bipartition $(T\cap S_{i_m})|[T\setminus (T\cap S_{i_m})]$ or $\ket{\phi_{\bar{T}}}$ is not product with respect to the bipartition $(\bar{T}\cap S_{i_m})|[\bar{T}\setminus (\bar{T}\cap S_{i_m})]$.
This, via Lemma \ref{LLem1},
implies that the condition $P^{\mathrm{sym}}_{S_{i_m}}\ket{\psi_T}\ket{\phi_{\bar{T}}}=\ket{\psi_T}\ket{\phi_{\bar{T}}}$
cannot be satisfied which contradicts (\ref{Ksym}). This completes the proof.
%
%
\end{proof}
%


The above fact, with Facts \ref{fact1} and \ref{fact2} as special cases, is one of the main results of the present paper. It is the key ingredient of the construction of novel 
examples of states that are genuinely entangled but not genuinely nonlocal  presented in the upcoming section.

Notice that we could easily generalize the fact to the case of subspaces which contain either genuinely entangled states or fully product ones. As noted earlier, the symmetric subspace bears this feature, but clearly it is not the unique one possessing this property.

\subsection{The construction} \label{konstrukcja}

Having the above observations at hand, we can now pass to the announced  construction of multipartite genuinely entangled states. 

We begin with the simplest, 
yet probably the most interesting case of bipartite states ($K=2$) extended to $N$-partite ones . 

Consider a bipartite state $\rho_{AB}$ acting on 
$\mathcal{H}_{d,2}$ and a pair of positive (P) trace preserving (TP) maps\footnote{A
linear map $\Lambda:\mathcal{B}(\mathcal{H})\to \mathcal{B}(\mathcal{K})$ is positive if when applied to a positive operator it returns a positive operator.}
\begin{equation}\label{maps22}
\Lambda_A:\mathcal{B}(\mathcal{H}_{d,1})\to \mathcal{B}(\mathcal{V}_1),\qquad \Lambda_B:\mathcal{B}(\mathcal{H}_{d,1})\to \mathcal{B}(\mathcal{V}_2),
\end{equation}
where $\mathcal{V}_1 \subseteq \mathcal{H}_{d,|S|}$ and $\mathcal{V}_2 \subseteq \mathcal{H}_{d',|\bar{S}|}$ 
are some subspaces belonging to $|S|$-partite and $|\bar{S}|$-partite Hilbert spaces of dimensions $d$ and $d'$, respectively, and $|S|+|\bar{S}|=N>2$. In general, the dimensions can be different, for clarity, however, in further parts of the paper we concentrate on the case $d=d'$. 
Of course, one could also consider a more general  case with the local dimensions differing within each Hilbert space, nevertheless, such generalization is
rather straightforward, and it would unnecessarily complicate the
considerations without providing any additional significant insight. We thus do not consider it here.
 We intentionally 
begin with a quite general class of
PTP maps, as this will be useful later, 
however, by imposing further containts on them 
we will recover the class of maps that can be used in our construction.

%

Let us now assume that the simultaneous action of both $\Lambda_A$ and $\Lambda_B$ on the respective subsystems of the state $\rho_{AB}$ results in a positive operator, i.e., $(\Lambda_A\otimes \Lambda_B)(\rho_{AB})\geq 0$, and consider the resulting $N$-partite quantum state 
\begin{equation}\label{StanSigma2}
 \sigma_{\textbf{A}}=\left(\Lambda_A
\ot\Lambda_B\right)(\rho_{AB}).
\end{equation}
%
 Notice here that we allow for the situation that one of the sets $S$ or $\bar{S}$ contains only one element, meaning that the output Hilbert space of this map is single-partite, and the state $\rho_{AB}$ is not expanded on this subsystem. 

Finally, we will need to assume that:  (i) both PTP maps $\Lambda_A$ and $\Lambda_B$
 are invertible, which means  that there exist $\Lambda_A^{-1}:\mathcal{B}(\mathcal{V}_1)\to \mathcal{B}(\calH_{d,1})$ and $\Lambda_B^{-1}:\mathcal{B}(\mathcal{V}_2)\to \mathcal{B}(\calH_{d,1})$ such that $\Lambda^{-1}_X(\Lambda_X(Z))=Z$ for any $Z$ and with $X=A,B$, (ii) both inverses are positive too.
This is a strong assumption as it is known that PTP maps have positive inverses in the above sense iff they are isometric conjugations, i.e., 
$\Lambda_X(\cdot)=V(\cdot)V^{\dagger}$ with $V$ satisfying $V^{\dagger}V=\mathbbm{1}_d$, or transpositions (see Corollary 6.2 of Ref. \cite{Wolf}). In what follows, for obvious reasons, we only focus on the first type of mappings  
and on many occasions call them shortly isometries. 

With all this at hand we can now state our second of the main results, which is a generalization of the one proven in Ref. \cite{krotka}.
\begin{thm}\label{thm1}
Consider a bipartite entangled state $\rho_{AB}$ acting on $\mathcal{H}_{2,d}$ and
two subspaces $\mathcal{V}_1 \subseteq \mathcal{H}_{|S|,d'}$ and $\mathcal{V}_2 \subseteq \mathcal{H}_{|\bar{S}|,d'}$ in some $|S|$ and $|\bar{S}|$-partite Hilbert spaces of local dimension $d'$ such that $|S|+|\bar{S}|=N>2$. Each subspace is assumed to be either symmetric or genuinely entangled in the corresponding Hilbert space. 
Then, the $N$-partite state (\ref{StanSigma2}) with $\Lambda_A$ and $\Lambda_B$ being isometric mappings is GME.
\end{thm}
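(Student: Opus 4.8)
The plan is to reduce Theorem \ref{thm1} to Fact \ref{fact4} (or rather its $K=2$ specialization, which is Fact \ref{fact2} together with its discussed extension) by showing that the output state $\sigma_{\textbf{A}}$ (i) satisfies the projection identity \eqnref{KGES} for the bipartition $S|\bar{S}$, and (ii) is entangled across $S|\bar{S}$. Once both are established, the corollary following Fact \ref{fact2} (equivalently Fact \ref{fact4}) immediately yields that $\sigma_{\textbf{A}}$ is GME.

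\medskip

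\noindent\textbf{Step 1: support condition.} Write $\Lambda_A(\cdot)=V_A(\cdot)V_A^\dagger$ and $\Lambda_B(\cdot)=V_B(\cdot)V_B^\dagger$ with isometries $V_A:\calH_{d,1}\to\mathcal{V}_1$ and $V_B:\calH_{d,1}\to\mathcal{V}_2$, so that $V_A^\dagger V_A=\mathbbm{1}_d$ and $V_A V_A^\dagger=P_{S}^{\mathrm{ges/sym}}$ (the projector onto the output subspace $\mathcal{V}_1$), and similarly for $B$. Then
\begin{equation}
(P_{S}\ot P_{\bar{S}})\,\sigma_{\textbf{A}}\,(P_{S}\ot P_{\bar{S}})=(V_AV_A^\dagger\ot V_BV_B^\dagger)(V_A\ot V_B)\rho_{AB}(V_A^\dagger\ot V_B^\dagger)=(V_A\ot V_B)\rho_{AB}(V_A^\dagger\ot V_B^\dagger)=\sigma_{\textbf{A}},
\end{equation}
using $V_AV_A^\dagger V_A=V_A$. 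Hence \eqnref{KGES} holds, with the subspace assigned to $S$ being $\mathcal{V}_1$ and the one assigned to $\bar{S}$ being $\mathcal{V}_2$, each symmetric or genuinely entangled by hypothesis.

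\medskip

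\noindent\textbf{Step 2: entanglement across $S|\bar{S}$.} Here I use invertibility of the maps. Suppose, for contradiction, that $\sigma_{\textbf{A}}$ is separable across $S|\bar{S}$, i.e. $\sigma_{\textbf{A}}=\sum_i p_i\,\tau_S^i\ot\omega_{\bar{S}}^i$ with $\tau_S^i,\omega_{\bar{S}}^i$ states on $\calH_S,\calH_{\bar{S}}$. By Step 1 each term may be taken supported on $\mathcal{V}_1\ot\mathcal{V}_2$ (conjugate by $P_S\ot P_{\bar{S}}$ and renormalize; positivity of the separable form is preserved). Applying the positive maps $\Lambda_A^{-1}\ot\Lambda_B^{-1}$ — which are well-defined on $\mathcal{B}(\mathcal{V}_1)\ot\mathcal{B}(\mathcal{V}_2)$ and send positive operators to positive operators — gives
\begin{equation}
\rho_{AB}=(\Lambda_A^{-1}\ot\Lambda_B^{-1})(\sigma_{\textbf{A}})=\sum_i p_i\,\Lambda_A^{-1}(\tau_S^i)\ot\Lambda_B^{-1}(\omega_{\bar{S}}^i),
\end{equation}
where each $\Lambda_A^{-1}(\tau_S^i)\ge 0$ and $\Lambda_B^{-1}(\omega_{\bar{S}}^i)\ge 0$, and the trace is preserved (TP), so after normalization this exhibits $\rho_{AB}$ as separable, contradicting the hypothesis that $\rho_{AB}$ is entangled. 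Therefore $\sigma_{\textbf{A}}$ is entangled across $S|\bar{S}$.

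\medskip

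\noindent\textbf{Conclusion.} Steps 1 and 2 verify exactly the hypotheses of the corollary to Fact \ref{fact2}/Fact \ref{fact4} (with $K=2$, or more generally the combined statement allowing symmetric or genuinely entangled subspaces on each side): $\sigma_{\textbf{A}}$ satisfies the support identity \eqnref{KGES} and is entangled across $S|\bar{S}$, hence it is GME. The case where one of $S,\bar{S}$ is a singleton is covered too: a single-party subspace $\mathbb{C}^{d'}$ is trivially "symmetric," and the argument is unchanged since $\Lambda$ on that side is just an isometric embedding $\mathbb{C}^d\hookrightarrow\mathbb{C}^{d'}$.

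\medskip

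\noindent\textbf{Main obstacle.} The only nonroutine point is the legitimacy of applying $\Lambda_A^{-1}\ot\Lambda_B^{-1}$ term-by-term in Step 2: one must be careful that the inverse maps are positive \emph{as maps on operators supported on the output subspaces} $\mathcal{V}_1,\mathcal{V}_2$ (not on the full output Hilbert spaces, where an inverse need not even exist), and that the separable decomposition of $\sigma_{\textbf{A}}$ can indeed be chosen with all blocks living on $\mathcal{V}_1\ot\mathcal{V}_2$ — this is where Step 1 is reused. Both are handled by the structure $V_X^\dagger V_X=\mathbbm{1}$, $V_XV_X^\dagger=P_X$, which makes $\Lambda_X^{-1}(\cdot)=V_X^\dagger(\cdot)V_X$ manifestly positive and a genuine left inverse on $\mathcal{B}(\mathcal{V}_i)$; invoking Corollary 6.2 of Ref. \cite{Wolf} is what guarantees there is no loss of generality in assuming this isometric form. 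I would also remark that $d=d'$ is not needed for the theorem itself, only for later parts of the paper.
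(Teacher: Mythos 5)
Your proof is correct and follows essentially the same route as the paper: the paper likewise invokes Fact \ref{fact4} to force separability of $\sigma_{\textbf{A}}$ across $S|\bar{S}$ and then applies the positive inverse maps $\Lambda_A^{-1}\otimes\Lambda_B^{-1}$ to contradict the entanglement of $\rho_{AB}$. Your Step 1 (explicitly verifying the support condition \eqnref{KGES}) and the remark about restricting the separable decomposition to $\mathcal{V}_1\otimes\mathcal{V}_2$ are details the paper leaves implicit, but they do not change the argument.
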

\begin{proof} Let us assume the contrary, i.e., that $\sigma_{\textbf{A}}$ is not GME. Due to the fact that 
the output subspaces of the positive maps $\Lambda_i$ are either symmetric or genuinely 
entangled, Fact \ref{fact4} tells us that $\sigma$ must be separable across
the bipartition $S|\bar{S}$, i.e., 
\begin{equation}
\sigma_{\textbf{A}}=(\Lambda_A\otimes \Lambda_B)(\rho_{AB})=\sum_{i}q_i \rho_{S}^i\otimes \rho_{\bar{S}}^i.
\end{equation}
Since $\Lambda_A$ and $\Lambda_B$ are invertible, we can express 
the state $\rho_{AB}$ as
\begin{equation}
\rho_{AB}=\sum_{i}q_i \Lambda_A^{-1}(\rho_S^i)\otimes \Lambda_B^{-1}(\rho_{\bar{S}}^i).
\end{equation}
Since $\Lambda^{-1}_X$ $(X=A,B)$ are positive too, this leads to a contradiction with the assumption that $\rho_{AB}$ is entangled.
\end{proof}

It turns out that this statement can be generalized to $K$ different than two, i.e., 
as the initial state we can use $K$-partite states that are genuinely entangled. We have the following.
\begin{thm}\label{thm2}
Let $\rho_{A_1\ldots A_K}$ be a $K$--partite GME state. Consider a collection of isometries:
\beqn
\Lambda: \calB(\calH_{A_i})\rightarrow \calB(\calV_i)\quad (i=1,\cdots, K),
\eeqn
where each $\calV_i$ is assumed to be either a symmetric or a genuinely entangled subspace of the $|S_i|$--partite subspace $\calH_{S_i}=\calH_{|S_i|,d'}=(\mathbb{C}^{d'})^{\otimes |S_i|}$ ($|S_1|+\cdots+ |S_K|=N$) corresponding to the group of parties $S_i$.

Then, the state %
\begin{equation}\label{StanBorys}
\sigma_{\textbf{A}}=(\Lambda_1\otimes\ldots\otimes \Lambda_K)(\rho_{A_1\ldots A_K}).
\end{equation}
is GME.
\end{thm}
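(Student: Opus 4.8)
The plan is to run the proof of Theorem~\ref{thm1} essentially verbatim, with the only change that the seed state is now $K$-partite GME rather than bipartite entangled, and that the role of ``separable across $S|\bar S$'' is taken over by ``not GME'' in the $K$-partite sense. First I would argue by contradiction: suppose $\sigma_{\textbf{A}}$ is not GME. Since each $\Lambda_i$ is an isometric conjugation $\Lambda_i(\cdot)=V_i(\cdot)V_i^{\dagger}$ with $V_i^{\dagger}V_i=\mathbbm{1}$, its range lies inside the output subspace $\calV_i$, so $P_{\calV_i}\Lambda_i(X)P_{\calV_i}=\Lambda_i(X)$ for every $X$. Hence $\sigma_{\textbf{A}}$ obeys the hypothesis (\ref{KGES}) of the general ($K$-partition) version of Fact~\ref{fact4} with respect to the $K$-partition $S_1|\ldots|S_K$ of the $N$ output parties and the projections $P_{S_i}$ onto $\calV_i$, each of which is symmetric or genuinely entangled by assumption. (The $K$-partition form of Fact~\ref{fact4} is the one announced in the text after Fact~\ref{fact2}; its proof is the obvious adaptation of the displayed proof of Fact~\ref{fact4}.)

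Second, Fact~\ref{fact4} then forces $\sigma_{\textbf{A}}$ into the form (\ref{dekompozycja}), $\sigma_{\textbf{A}}=\sum_{T|\bar T}p_{T|\bar T}\,\rho_{T|\bar T}^{\mathrm{sep}}$, where the sum runs only over bipartitions in which both $T$ and $\bar T$ are unions of the sets $S_i$, and each $\rho_{T|\bar T}^{\mathrm{sep}}$ is separable across $T|\bar T$. Conjugating this identity by the product projector $P=\bigotimes_i P_{\calV_i}$ (which fixes $\sigma_{\textbf{A}}$ and, because $T$ is a union of the $S_i$, factorizes across $T|\bar T$) lets me assume without loss of generality that every $\rho_{T|\bar T}^{\mathrm{sep}}$ is supported on $\bigotimes_i\calV_i$, hence lies in the domain of $\Lambda_1^{-1}\otimes\ldots\otimes\Lambda_K^{-1}$.

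Third, I would pull the decomposition back through the inverses. Since the $\Lambda_i$ are invertible with positive inverses (Corollary~6.2 of Ref.~\cite{Wolf}), applying $\Lambda_1^{-1}\otimes\ldots\otimes\Lambda_K^{-1}$ to $\sigma_{\textbf{A}}=(\Lambda_1\otimes\ldots\otimes\Lambda_K)(\rho_{A_1\ldots A_K})$ recovers $\rho_{A_1\ldots A_K}$, while applying it termwise yields
\begin{equation}
\rho_{A_1\ldots A_K}=\sum_{T|\bar T}p_{T|\bar T}\,(\Lambda_1^{-1}\otimes\ldots\otimes\Lambda_K^{-1})(\rho_{T|\bar T}^{\mathrm{sep}}).
\end{equation}
For a given term, writing $T=\bigcup_{i\in I}S_i$, the inverse map splits as $\big(\bigotimes_{i\in I}\Lambda_i^{-1}\big)\otimes\big(\bigotimes_{i\notin I}\Lambda_i^{-1}\big)$, which respects the bipartition $\{A_i:i\in I\}\,|\,\{A_i:i\notin I\}$ of the $K$ original parties; being positive, it sends a state separable across $T|\bar T$ to a positive operator separable across that bipartition of $\{A_1,\ldots,A_K\}$. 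Thus $\rho_{A_1\ldots A_K}$ is a convex mixture of states each biseparable across some bipartition of $\{A_1,\ldots,A_K\}$, i.e.\ it is \emph{not} GME — contradicting the hypothesis.

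The argument is structurally identical to that of Theorem~\ref{thm1}, so I do not expect a genuine obstacle; the only point that needs care is that the coarse-grained bipartitions $T|\bar T$ produced by Fact~\ref{fact4} are exactly the ones that, after pulling back, become honest bipartitions of the $K$ seed parties. This is immediate from the requirement in Fact~\ref{fact4} that $T$ be a union of the $S_i$'s, which is precisely what makes the tensor factorization of $\bigotimes_i\Lambda_i^{-1}$ compatible with the cut. The secondary technicality — ensuring each $\rho_{T|\bar T}^{\mathrm{sep}}$ may be taken supported on $\bigotimes_i\calV_i$ so that the inverse maps are defined on it — is handled by the projector conjugation noted above, and is the one place where one should be slightly more explicit than the displayed proof of Theorem~\ref{thm1}.
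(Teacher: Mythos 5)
Your proposal is correct and follows essentially the same route as the paper: assume $\sigma_{\textbf{A}}$ is not GME, invoke Fact~\ref{fact4} to force the decomposition (\ref{dekompozycja}) with $T$ a union of the $S_i$'s, and pull that decomposition back through the (positive, in fact completely positive) inverses of the isometric conjugations to contradict the GME of $\rho_{A_1\ldots A_K}$. The paper compresses the pull-back into a single sentence; you merely make explicit the two points it leaves implicit (restricting the supports to $\bigotimes_i\calV_i$ and the factorization of $\bigotimes_i\Lambda_i^{-1}$ across the coarse-grained cut), which is a faithful elaboration rather than a different argument.
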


%
%
%
%
%
%
\begin{proof} Assume that the resulting state $\sigma_\textbf{A}$ is not GME. Due to the fact that 
the output subspaces of the positive maps $\Lambda_i$ are either symmetric or genuinely 
entangled, Fact \ref{fact4} tells us that $\sigma_\textbf{A}$ must be a convex combination of states that are separable across certain bipartitions $T|\bar{T}$, where $T$ are unions of the sets $S_i$ [see Eq. (\ref{dekompozycja})]. Then, since 
all maps $\Lambda_i$ are isometries, this would mean that the ,,initial'' state $\rho_{A_1\ldots A_K}$ is not GME, which contradicts the assumption of the theorem.
\end{proof}

We thus have a quite general construction of $N$-partite GME states from 
$K$-partite GME ones with $K<N$. Below, we demonstrate how the method works in practice, 
constructing
a few examples of noisy multipartite states that are genuinely entangled.

\subsection{Examples}
\label{Examples}

Let us now illustrate our method by applying it to a few paradigmatic classes of states.

\paragraph{Example 1.} Let us begin with an example considered already in Ref. \cite{krotka} which concerns 
the well-known class of isotropic states \cite{isotropic}:
\begin{equation}\label{isotr}
\rho_{\mathrm{iso}}(p)=p\proj{\phi_{d}^+}+(1-p)\frac{\mathbbm{1}\otimes \mathbbm{1}}{d^2}\qquad
(0\leq p\leq 1),
\end{equation}
where $\ket{\phi_{d}^+}=(1/\sqrt{d})\sum_{i=0}^{d-1}\ket{ii}$
is the maximally entangled state of two qudits, while $\mathbbm{1}$ is a $d\times d$ identity matrix. For the maps $\Lambda_X$ we then take rank-one (i.e. with a single Kraus operator) completely positive maps 
given as $\Lambda_A(\cdot)=V_L(\cdot)V_L^{\dagger}$ and $\Lambda_B(\cdot)=V_{N-L}(\cdot)V_{N-L}^{\dagger}$ with $V_{L}$ being an isometry defined as 
\beq\label{izometria-prosta}
V_L\ket{i}=\ket{i}^{\otimes L},
\eeq 
where $\{\ket{i}\}$ is the computational basis in $\mathbbm{C}^d$. The output subspaces of both channels are certain subspaces of $\mathrm{Sym}((\mathbbm{C}^d)^{\otimes L})$ and $\mathrm{Sym}((\mathbbm{C}^d)^{\otimes (N-L)})$, respectively, and therefore our results can be applied here. 
An application of the isometries (\ref{izometria-prosta})  to $\rho_{\mathrm{iso}}(p)$ results in the following class of $N$-qudit states
\begin{equation}\label{example1}
\rho_{N}(p)=p\proj{\mathrm{GHZ}^{(+)}_{d,N}}+(1-p)\frac{\mathcal{P}_{d,L}\otimes \mathcal{P}_{d,N-L}}{d^2},
\end{equation}
with $\ket{\mathrm{GHZ}^{(+)}_{d,N}}=(1/\sqrt{d})\sum_{i=0}^{d-1} \ket{i}^{\otimes N}$ being the well-known $d$-level GHZ state and $\mathcal{P}_{d,L}=\sum_{i=0}^{d-1}\proj{i}^{\otimes L}$. The states $\rho_N(p)$ are GME for any $p$ for which the isotropic states are entangled, i.e., $p>1/(d+1)$.

We now present a different extension of the isotropic state being an illustration to Theorem \ref{thm1}. With this aim consider the $n$ qubit GES from Ref. \cite{GESy} spanned by the unnormalized vectors:
\beqn \label{wektory-w-ges}
\ket{0}\sum_{k=2}^{n}\ket{(2^{n-k}+j)_2}-\ket{1}\ket{(j)_2}, \quad j=0,1,\dots, 2^{n-2}-1,
\eeqn
where $(\cdot)_2$ is the $(n-1)$ digit binary representation of a number.
Assume that an orthonormal basis for this GES is  $\{\ket{\phi_j}\}_{j=0}^{2^{n-2}-1}$. Set now $d=2^{n-2}$ and consider the following isometries with the output in the GES:
\beqn
V_n \ket{j}=\ket{\phi_j}, \qquad 0\le j \le 2^{n-2}-1.
\eeqn
Applying locally such isometries to $A$ and $B$ of (\ref{isotr}) gives the following $2n$ qubit state:
\beqn
\hat{\rho}_{2n}(p)= p \proj{\Phi^{+}_{\mathrm{GES}}}+(1-p) \frac{\calP_{\mathrm{GES}}\otimes \calP_{\mathrm{GES}}}{2^{2n-4}},
\eeqn
where $ \ket{\Phi^{+}_{\mathrm{GES}}}=(1/\sqrt{2^{n-2}-1})\sum_{j=0}^{2^{n-2}-1}\ket{\phi_j}\ket{\phi_j}$ and $\calP_{\mathrm{GES}}=\sum_{j=0}^{2^{n-2}-1}\proj{\phi_j}$ is the projection on the GES under scrutiny. Reasoning as above, we have that $\hat{\rho}_{2n}(p)$ is GME for $p>1/(2^{n-2}-1)$.

As an example consider the case $n=3$. Then, the GES is spanned by:
\beqn
&&\ket{\phi_0}=\frac{1}{\sqrt{3}} (\ket{001}+\ket{010}-\ket{100}), \\
&&\ket{\phi_1}=\frac{1}{\sqrt{6}}(\frac{3}{2}\ket{001}+\ket{010}-\frac{1}{2}\ket{011}-\frac{3}{2}\ket{100}+\frac{1}{2}\ket{101}).
\eeqn
%

\paragraph{Example 2.} Building on the above example we can construct a more general class of GME states. Consider any pure entangled state of two qudits written in the Schmidt form as
\begin{equation}\label{pure}
\ket{\psi_{\boldsymbol{\mu}}}=\sum_{i=0}^{d-1}\sqrt{\mu_i} \ket{ii}
\end{equation}
with $\boldsymbol{\mu}$ being a vector consisting of the Schmidt coefficients $\mu_i> 0$, and consider its mixture with white noise
\begin{equation}\label{genmixture}
\rho_{\boldsymbol{\mu}}(p)=p \proj{\psi_{\boldsymbol{\mu}}}+(1-p)\frac{\mathbbm{1}\otimes \mathbbm{1}}{d^2}\qquad
(0\leq p\leq 1).
\end{equation}
It is known that this state is entangled iff \cite{VidalTarrach}:
\begin{equation}\label{prawdcond}
p\leq p^{\mathrm{sep}}_{\boldsymbol{\mu}}\equiv  \frac{1}{d^2\theta+1}
\end{equation}
with $\theta=\max_{i\neq j}\{\sqrt{\mu_i\mu_j}\}$. The application of $\Lambda_A$ and $\Lambda_B$ introduced in the previous example to the corresponding subsystems of $\rho_{\boldsymbol{\mu}}(p)$ leads us to the following quite general class of $N$-partite states
\begin{equation}\label{mixture}
\rho_{N,\boldsymbol{\mu}}(p)=p \proj{\psi_{N,\boldsymbol{\mu}}}+(1-p)\frac{\mathcal{P}_{d,L}\otimes \mathcal{P}_{d,N-L}}{d^2}\qquad
(0\leq p\leq 1),
\end{equation}
where $\ket{\psi_{N,\boldsymbol{\mu}}}$ is the so-called $N$-qudit Schmidt state, i.e., 
a generalization of the $N$-qudit GHZ state given by 
\begin{equation}
\ket{\psi_{N,\boldsymbol{\mu}}}=\sum_{i=0}^{d-1}\sqrt{\mu_i}\ket{i}^{\otimes N}.
\end{equation}
We thus obtain a quite general class of $N$-qudit states which are mixtures of the Schmidt states and some particular type of noise whose entanglement is straightforward to characterize
via our results: they are genuinely entangled iff the condition (\ref{prawdcond}) is satisfied.

\paragraph{Example 3.} 
The above choice of the isometry $V_L$ is probably the simplest one that one could think of. Now, our aim is to provide a less direct example, in particular one that maps
bipartite qudit states into multipartite qubit ones.
Let us consider a particular two-qudit pure state (\ref{pure}) of the form 
\begin{equation}
\ket{\varphi}=\frac{1}{\sqrt{\binom{2N}{d-1}}}\sum_{i=0}^{d-1}\sqrt{\binom{N}{i}\binom{N}{d-1-i}}\,\ket{ii},
\end{equation}
and, again, its mixture with white noise as given in Eq. (\ref{mixture}).
Let us then consider the following isometries 
\begin{equation}
V_N\ket{i}=\ket{D_{N,i}} \qquad V'_N\ket{i}=\ket{D_{N,d-1-i}}\qquad (i=0,\ldots,d-1),
\end{equation}
which map the standard basis in $\mathbbm{C}^{d}$ to $N$-qubit symmetric
Dicke states with $N=d-1$. After applying them to $\ket{\varphi}$ one obtains the following 
$N$-qubit pure state
\begin{equation}\label{Raimat}
V_N\otimes V'_N\ket{\varphi}=\frac{1}{\sqrt{\binom{2N}{d-1}}}\sum_{i=0}^{d-1}\sqrt{\binom{N}{i}\binom{N}{d-1-i}}\,\ket{D_{N,i}}\ket{D_{N,d-1-i}},
\end{equation}
which with a bit of algebra can be shown to be simply the $2N$-qubit state with $d-1$ excitations $\ket{D_{2N,d-1}}$; this is because the factors appearing under the sum are normalization factors of the Dicke states $\ket{D_{N,i}}$ and $\ket{D_{N,d-1-i}}$, and thus 
(\ref{Raimat}) is a normalized sum of all $2N$-qubit kets containing $d-1$ ones and 
$d-1$ zeros, which is nothing but the Dicke state $\ket{D_{2N,d-1}}$.
 
When applied to the mixture of $\ket{\varphi}$ and the white noise, these isometries give rise to the following $2N$-qubit noisy Dicke state
given by
\begin{equation}\label{noisyDicke}
\rho_{2N}(p)=p\proj{D_{2N,d-1}}+(1-p)\frac{P^{\mathrm{sym}}_{d-1}\otimes P^{\mathrm{sym}}_{d-1}}{d^2},
\end{equation}
where $P^{\mathrm{sym}}_{d-1}$ stands for the projection onto the symmetric subspace of
the $(d-1)$-qubit Hilbert space. As before we can easily decide on the values of $p$ for which this state is 
genuinely entangled. From Eq. (\ref{prawdcond}) it follows that the state $\rho_{2N}(p)$
is genuinely entangled for $p>1/(d^2\theta_{\mathrm{ev/odd}}+1)$ with 
\begin{equation}
\theta_{\mathrm{ev}}=\frac{1}{\binom{2(d-1)}{d-1}}\binom{d-1}{\lfloor \frac{d-1}{2}\rfloor}
\binom{d-1}{\lceil \frac{d-1}{2}\rceil}
\end{equation}
for even $d$, and
\begin{equation}
\theta_{\mathrm{odd}}=\frac{1}{\binom{2(d-1)}{d-1}}\binom{d-1}{\frac{d-1}{2}}
\sqrt{\binom{d-1}{\frac{d-3}{2}}\binom{d-1}{\frac{d+1}{2}}}
\end{equation}
for odd $d$. 
%
%
%
%

%

Let us notice that with a little bit more effort one can analogously construct 
a mixture of an $N$-qubit Dicke state with an arbitrary number of excitations
$k=0,\ldots,N$ and some noise of the above type above that are genuinely entangled.

\paragraph{Example 4.} For the last example of an application of the method to bipartite states consider another particular two--qubit  Bell diagonal state (the isotropic state is also in this class):
\beqn\label{bell-diag}
\rho_{Bell}(p)= p \proj{\phi^+_2}+(1-p) \proj{\phi^-_2},
\eeqn
where $\ket{\phi^{\pm}_2}=1/\sqrt{2} (\ket{00}\pm \ket{11})$. One can easily verify that the state is entangled iff $p\ne 1/2$. Using the isometry (\ref{izometria-prosta}) with any $L$, we extend this state to the mixture of two GHZ states with opposite relative phases, i.e.
\beqn\label{bell-diag-extended}
\sigma_{Bell}(p)= p \proj{\mathrm{GHZ}^{(+)}_{2,N}}+(1-p) \proj{\mathrm{GHZ}^{(-)}_{2,N}},
\eeqn
where $\ket{\mathrm{GHZ}^{(-)}_{2,N}}=1/\sqrt{2} (\ket{0}^{\otimes N}-\ket{1}^{\otimes N})$. Clearly, the resulting state is entangled across the bipartite cut induced by the isometries iff $p\ne1/2$, and, in consequence this constitutes the region in which it is also GME. Note that the subspace after the extension, which is spanned by $\mathrm{GHZ}^{(+)}$ and  $\mathrm{GHZ}^{(-)}$, falls into both categories: it is symmetric and genuinely entangled.

\paragraph{Example 5.} We now move to examples of multipartite departure states and begin with the case $K=3$.
To that end, we consider the following three-qubit density matrix introduced in Ref. \cite{TothAcin}:
\begin{equation}\label{tripartite}
\rho_{ABC}=\frac{1}{8}\mathbbm{1}_8+\frac{1}{8}\sum_{i=1}^3\left[\frac{1}{3}\mathbbm{1}_2\otimes \sigma_i\otimes\sigma_i-\frac{1}{2}\sigma_i\otimes\mathbbm{1}_2\otimes\sigma_i\right],
\end{equation}
where $\sigma_i$ for $i=1,2,3$ are the standard Pauli matrices. 

Let us then consider the isometries $\Lambda_i=V_{L_i}(\cdot)V_{L_i}^{\dagger}$ $(i=1,2,3)$, where $L_i$ are integers greater than one and $V_L$ is introduced above. When applied to the subsystems of the state $\rho_{ABC}$, they give the following $N$-qubit state
\begin{eqnarray}\label{Ntripartite}
\sigma_{\textbf{A}}&\equiv &\Lambda_1\otimes\Lambda_2\otimes\Lambda_3(\rho_{ABC})\nonumber\\
&=&\frac{1}{8}\mathcal{P}_{L_1,2}\otimes\mathcal{P}_{L_2,2}\otimes \mathcal{P}_{L_3,2}+\frac{1}{8}\sum_{i=1}^{3}\left[\frac{1}{3}\mathcal{P}_{L_1,2}\otimes \Sigma_{L_2,i}\otimes \Sigma_{L_3,i}-\frac{1}{2}\Sigma_{L_1,i}\otimes\mathcal{P}_{L_2,2}\otimes\Sigma_{L_3,i}\right],
\nonumber\\
\end{eqnarray}
where $\mathcal{P}_{L,2}$ are defined above and $\Sigma_{L,i}=V_L \sigma_i V_L^{\dagger}$ with $i=1,2,3$ are Pauli matrices embedded in the $L$-qubit Hilbert space. As discussed in Ref. \cite{TothAcin}, the state $\rho_{ABC}$ is genuinely entangled, and thus by virtue of Fact \ref{fact4}, the state $\sigma_{\textbf{A}}$ is genuinely entangled too.

\paragraph{Example 6.} Here, we give yet another illustration of the method in the case of a multipartite state as a departure one. In particular, we consider a $K$-qubit mixed state with the support in the subspace spanned by the the $W$ state:
\beqn
\ket{W_{2,K}}=\frac{1}{ \sqrt{K}}(\ket{10\dots 00}+\ket{01\dots 00}+\cdots +\ket{00\dots 01}) .
\eeqn
and its ,,complement''
\beqn
\ket{\hat{W}_{2,K}}=\frac{1}{ \sqrt{K}}(\ket{01\dots 11}+\ket{10\dots 11}+\cdots+\ket{11\dots 10})=\sigma_x ^{\otimes K} \ket{W_{2,K}} .
\eeqn
It is known that any incoherent mixture of these states:
\beqn\label{mixture-W-GHZ}
\rho_{A_1A_2\dots A_K}= p \proj{W_{2,K}}+(1-p)\proj{\hat{W}_{2,K}}
\eeqn
is genuinely entangled for any $K$ \cite{kaszlikowski}. Consider now isometries $V: \mathbb{C}^2 \rightarrow (\mathbb{C}^2)^{\otimes L}$ acting as follows:
\beqn
&&\ket{0}_{A_i} \longrightarrow \ket{W_{2,L}}_{\textbf{A}^{(L)}_i},\\
&&\ket{1}_{A_i} \longrightarrow \ket{\hat{W}_{2,L}}_{\textbf{A}^{(L)}_i},
\eeqn
where $\textbf{A}^{(L)}_i:={A_{(i-1)L+1} \dots A_{i L}}$. As in Example 4, this isometry sends to a subspace which is both symmetric and genuinely entangled.
An application of $V^{\otimes K}$ to the state (\ref{mixture-W-GHZ}) results in a genuinely entangled $N$-partite ($N=LK$) state:
\beqn
\sigma_{\textbf{A}}&=&V^{\otimes K} \rho_{A_1A_2\dots A_K} (V^{\dagger})^{\otimes K}= \non
&=& p \proj{(W_{\hat{W}})_{2,N}}_{\textbf{A}}+(1-p) \proj{(\hat{W}_{W})_{2,N}}_{\textbf{A}},
\eeqn
where
\beqn
&&\ket{(W_{\hat{W}})_{2,N}}_{\textbf{A}}=\frac{1}{\sqrt{K}} \sum_{i=1}^K \left[ \ket{\hat{W}_{2,L}}_{\textbf{A}^{(L)}_i}  \otimes\left(   \bigotimes_{\substack{ f=1\\ f\ne i}}^K \ket{W_{2,L}}_{\textbf{A}^{(L)}_f}  \right)  \right],\non
&&\ket{(W_{\hat{W}})_{2,N}}_{\textbf{A}}=\frac{1}{\sqrt{K}} \sum_{i=1}^K \left[ \ket{W_{2,L}}_{\textbf{A}^{(L)}_i}  \otimes \left(   \bigotimes_{\substack{ f=1\\ f\ne i}}^K \ket{\hat{W}_{2,L}}_{\textbf{A}^{(L)}_f}  \right) \right].
\eeqn
\newline

Concluding this section we note an interesting feature of the states constructed within our approach. Namely, their entanglement properties do not depend on the number of particles in the system, i.e., once the starting state is known to be (genuinely) entangled, the resulting one is guaranteed to be GME regardless of the number of parties after the extension. This is in contrast to the situation occurring in case of many important classes of states not covered by the current aproach, e.g., the mixture of the GHZ state and the white noise (see Ref. \cite{guhne-seevinck}).

\section{Application to (un)steerability and (non)locality of GME states}
\label{Sec:GMN}

Importantly, apart from merely providing examples of GME states, our construction
can also be applied to provide further instances of multipartite states
that are not genuinely multiparty nonlocal. We will also apply the method to present unsteerability of GME states.

\subsection{Multipartite nonlocality}
\label{multipartitenonlocality}

Before being able to state our results we need some preparation.
Assume again that the parties share some state $\rho_{\bold{A}}$, but now on their share 
of this state, each party $A_i$ performs a measurement $M_{i}$ with measurement operators
 denoted by $M_{a_i}^{(i)}$, where $a_i$ labels the outcomes. 
Recall that in order to form a quantum measurement
the operators $M_{a_i}^{(i)}$ must be positive and sum up to the
identity $\mathbbm{1}_d$. If these operators are supported on
orthogonal subspaces, i.e.,
$M_{a_i}^{(i)}M^{(i)}_{a_i'}=\delta_{a_ia'_i}M_{a_i}^{(i)}$, we
call the corresponding measurement \textit{projective} (PM).
Otherwise, it is called a \textit{generalized} measurement (GM; also called
POVM).

These measurements on $\rho_{\mathbf{A}}$ give rise to the probability distribution
\begin{eqnarray}
 p(\mathsf{a}|\mathsf{M})&:=&p(a_1,\ldots,a_N|M_1,
\ldots,M_N)\nonumber\\
&=& \tr \left[ \left( M_{a_1}^{(1)}\otimes \cdots\otimes
M_{a_N}^{(N)}\right) \rho_{\mathbf{A}}\right]  ,
\end{eqnarray}
where each $p(\mathsf{a}|\mathsf{M})$
denotes the probability of obtaining
$\mathsf{a}:=a_1,\ldots,a_N$ upon measuring
$\mathsf{M}:=M_1,\ldots,M_N$. 

In full analogy to the notion of $K$-separability
let us then imagine that for any choice of the measurements $M_1,\ldots,M_N$,
the probability distribution $p(\mathsf{a}|\mathsf{M})$
admits the following convex combination
%
%
\begin{equation}\label{model}
p(\mathsf{a}|\mathsf{M})=\sum_{\calS_K\in\mathdollar_K}p_{S_K}
\int_{\Omega_{S_K}}\mathrm{d}\lambda\,\omega_{\calS_K}(\lambda)
p_1(\mathsf{a}_{S_1}|\mathsf{M}_{S_1},\lambda)\ldots
p_K(\mathsf{a}_{S_K}|\mathsf{M}_{S_K},\lambda).
%
\end{equation}
Here the sum goes over all $K$-partitions $\calS_K\in\mathdollar_K$,
$p_{S_K}$ and $\omega_{\calS_K}$ are probability distributions parameterized
by the $K$-partitions $\mathcal{S}_K$'s, and $\Omega_{S_K}$'s are sets over which $\lambda$'s are distributed. Moreover,
$p_{k}(\mathsf{a}_{S_{k}}|\mathsf{M}_{S_{k}})$, called a {\it response function}, is the probability that the
parties belonging to the set $S_k$ obtain results $\mathsf{a}_{S_k}$ upon
measuring $\mathsf{M}_{S_k}$. 

Now, if we do not impose any conditions on the nature of the response function
$p_i(\mathsf{a}_{S_i}|\mathsf{M}_{S_i})$ (quantum, nonsignalling, etc.), except
that they form a proper probability distribution, i.e., they are nonnegative and sum up
to one, Eq. (\ref{model}) gives the definition of \textit{$K$-locality} of the
state $\rho_{\mathbf{A}}$ due to Svetlichny \cite{Svetlichny}.
Analogously to entanglement, if $K=N$ we call $\rho_{\mathbf{A}}$ \textit{fully local},
while if $K=2$ it is said to be \textit{bilocal}.  If we consider a certain $K$--partition  $\calS_K$ we omit the sum over the partitions in Eq. (\ref{model}) and talk about $K$--locality with respect to $\calS_K$.
States that are local are also said to have a {\it  local hidden variable} (LHV) model (of the corresponding type) or simply a local model.
Finally, if $p(\mathsf{a}|\mathsf{M})$
is not bilocal for any choice of the measurements, then we say that
$\rho_{\mathbf{A}}$ is \textit{genuinely multipartite nonlocal (GMN)}.

The intuition behind Svetlichny's definition is that the correlations
that are not fully local might still display locality
if some parties are grouped together, in the
sense that their statistics can be described by global probability distributions
of any nature. In particular, they do not need to be quantum, i.e. obtainable via Born's rule. It should be emphasized that within this approach each
party still has only access to their subsystem of $\rho_{\mathbf{A}}$, or, in other words,
joint measurements of more than one particle are not allowed. However, although at first sight quite natural, the
definition of Svetlichny has been shown to be inconsistent with the operational
interpretation of nonlocality in multipartite systems
\cite{NonlocalityDefinition,NonlocalityDefinition2}. 
%
%
%
%
%
%
One of the ways to fix this problem is to impose that 
all response functions in (\ref{model}) obey the no-signalling principle. This
gives one of the operational definitions of $K$-locality proposed in
\cite{NonlocalityDefinition,NonlocalityDefinition2}: $\rho_{\mathbf{A}}$ is $K$-local if the
for any choice of the measurements the corresponding probability distribution
$p(\mathsf{a}|\mathsf{M})$ admits the form (\ref{model}) in which all response
functions $p_k(\mathsf{a}_{S_k}|\mathsf{M}_{S_k})$ are nonsignalling.


Although, as already said, the Svetlichny definition is not
consistent with the operational interpretation of nonlocality in multipartite
systems, we use it because it allows us to state our results in a general way. 
Still, our construction is also capable of providing  states that 
are GME but not GMN even if the above operational
definition of nonlocality is used.  Later, we provide some illustrative examples.

Let us now get back to our construction and consider a $K$-partite state $\rho_{A_1\ldots A_K}$, but this time we additionally assume that it has a fully local model for generalized measurements. An example of such a state for $K=2$ would be the isotropic states $\rho_{\mathrm{iso}}(p)$ defined in Eq. (\ref{isotr}), or in the general multipartite scenario
the one constructed in Ref. \cite{Hirsch}. 
Consider then some $K$-partition $S_1|\cdots|S_N$  of $N$ parties
and define a collection of $K$ maps
\beqn\label{mapy}
\Lambda_i: \calB(\calH_{A_i})\to \calB(\calH_{S_i}), \qquad 1 \leq i \leq K 
\eeqn
with $\calH_{S_i}=\mathcal{H}_{|S_i|,d'} $
and $|S_1|+\ldots+|S_K|=N>K$.
As previously, we assume these maps to be trace-preserving, however, instead of assuming that they are positive we impose their dual maps
$\Lambda_i^{\dagger}:\calB(\mathcal{H}_{|S_i|,d'})\to \calB(\calH_{A_i})$
to be positive on products of positive operators (PPPO)\footnote{Recall that a linear map $\Lambda:\calB((\mathbbm{C}^d)^{\otimes N})\to \calB(\mathbbm{C}^d)$ is termed positive on products of positive operators if $\Lambda(P_1\otimes \ldots\otimes P_N)\geq 0$ for any sequence of positive operators $P_i$ \cite{PPPO-horodeccy}.}. The operations defined in Eq. (\ref{mapy}) will serve us to extend the state $\rho_{A_1\ldots A_K}$ to an $N$--partite one just as it was in the previous sections.

The following statement was made by Barrett in \cite{Barrett} and later generalized in \cite{krotka}.
\begin{thm}\label{thm3}
Let $\rho_{A_1\ldots A_K}$ be a state 
acting on $\mathcal{H}_{d,K}$ that has a fully local model for 
generalized measurements. Then, for any collection of 
trace-preserving maps $\Lambda_i: \calB(\calH_{A_i})\to \calB(\calH_{S_i})$ whose dual maps $\Lambda_i^{\dagger}$ are 
PPPO and such that  $\otimes_{i=1}^K \Lambda_i(\rho_{A_1\ldots A_K})\geq 0$, the 
$N$-partite state $\sigma_{\textbf{A}}=\otimes_{i=1}^K\Lambda_i(\rho_{A_1\ldots A_K})$
is $K$-local with respect to the $K$-partition $S_1|\ldots|S_K$.
\end{thm}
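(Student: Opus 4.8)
The plan is to transport the local model for $\rho_{A_1\ldots A_K}$ through the maps $\Lambda_i$ onto $\sigma_{\textbf A}$, using the dual maps $\Lambda_i^{\dagger}$ to convert measurements on the output space $\calH_{S_i}$ back into (generalized) measurements on $\calH_{A_i}$. First I would fix an arbitrary choice of local measurements $\mathsf{M}=M_1,\ldots,M_N$ performed by the $N$ parties on $\sigma_{\textbf A}$. The parties in the group $S_i$ jointly perform the product POVM with elements $\bigotimes_{j\in S_i}M_{a_j}^{(j)}$ on the subsystem $\calH_{S_i}$. Applying the dual map, define $\widetilde M^{(i)}_{\mathsf{a}_{S_i}}:=\Lambda_i^{\dagger}\big(\bigotimes_{j\in S_i}M_{a_j}^{(j)}\big)\in\calB(\calH_{A_i})$. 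Because $\Lambda_i^{\dagger}$ is PPPO, each $\widetilde M^{(i)}_{\mathsf{a}_{S_i}}$ is a positive operator, and because $\Lambda_i$ is trace-preserving, $\Lambda_i^{\dagger}$ is unital, so $\sum_{\mathsf{a}_{S_i}}\widetilde M^{(i)}_{\mathsf{a}_{S_i}}=\Lambda_i^{\dagger}(\jedynka)=\jedynka_d$. Hence $\{\widetilde M^{(i)}_{\mathsf{a}_{S_i}}\}$ is a legitimate (generalized) measurement on $A_i$; this is the one step that genuinely uses the PPPO hypothesis rather than mere positivity.

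Next I would invoke the defining adjoint identity $\tr[X\,\Lambda_i(Y)]=\tr[\Lambda_i^{\dagger}(X)\,Y]$ to rewrite the Born-rule probabilities generated by $\sigma_{\textbf A}$:
\begin{equation}
p(\mathsf{a}|\mathsf{M})=\tr\!\left[\Big(\bigotimes_{i=1}^K\bigotimes_{j\in S_i}M_{a_j}^{(j)}\Big)\Big(\bigotimes_{i=1}^K\Lambda_i\Big)(\rho_{A_1\ldots A_K})\right]=\tr\!\left[\Big(\bigotimes_{i=1}^K\widetilde M^{(i)}_{\mathsf{a}_{S_i}}\Big)\rho_{A_1\ldots A_K}\right].
\end{equation}
Thus the $N$-partite statistics of $\sigma_{\textbf A}$ under $\mathsf{M}$ coincide exactly with the $K$-partite statistics of $\rho_{A_1\ldots A_K}$ under the single-party generalized measurements $\widetilde M^{(1)},\ldots,\widetilde M^{(K)}$, with the identification $\mathsf{a}_{S_i}\mapsto$ outcome of $\widetilde M^{(i)}$. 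Here one should note that $\Lambda_i\otimes\jedynka$ maps positive operators to positive ones on the relevant subsystem (this is what makes $\bigotimes_i\Lambda_i(\rho)\geq 0$ meaningful and also ensures the adjoint manipulation is with genuine operators), but positivity of $\sigma_{\textbf A}$ itself is assumed in the statement, so this is not an obstacle.

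Finally I would feed these statistics into the assumed fully local model for generalized measurements of $\rho_{A_1\ldots A_K}$: there exist a measure space $\Omega$, a distribution $\omega(\lambda)$, and response functions $q_i(\mathsf{a}_{S_i}|\widetilde M^{(i)},\lambda)$ such that $\tr[(\bigotimes_i\widetilde M^{(i)}_{\mathsf{a}_{S_i}})\rho]=\int_\Omega \mathrm{d}\lambda\,\omega(\lambda)\prod_{i=1}^K q_i(\mathsf{a}_{S_i}|\widetilde M^{(i)},\lambda)$. Re-expressing $q_i$ as a response function in the original measurements $\mathsf{M}_{S_i}$ of the parties in $S_i$ (the dependence factors through $\widetilde M^{(i)}$, which depends only on $\mathsf{M}_{S_i}$), this is exactly a model of the form (\ref{model}) restricted to the single $K$-partition $S_1|\ldots|S_K$, with no constraints imposed on the response functions beyond being proper probability distributions. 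Since $\mathsf{M}$ was arbitrary, $\sigma_{\textbf A}$ is $K$-local with respect to $S_1|\ldots|S_K$ in the sense of Svetlichny. The only subtlety worth spelling out — and the closest thing to a genuine obstacle — is checking that $\widetilde M^{(i)}_{\mathsf{a}_{S_i}}$ really forms a valid measurement, i.e.\ the combination of PPPO (giving positivity) with trace-preservation of $\Lambda_i$ (giving, by duality, unitality of $\Lambda_i^{\dagger}$ and hence normalization); everything else is the standard adjoint/Born-rule bookkeeping.
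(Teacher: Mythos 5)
Your proof is correct and follows essentially the same route the paper relies on: pulling the product POVMs back through the dual maps $\Lambda_i^{\dagger}$ (positivity from PPPO, normalization from unitality of the dual of a trace-preserving map), matching the statistics via the adjoint identity, and then invoking the fully local model of $\rho_{A_1\ldots A_K}$. The paper defers this proof to the supplementary material of Ref.~\cite{krotka}, but the identical dual-map pull-back technique appears explicitly in its appendix proof of Fact~\ref{obserwacja-steering} and in the discussion of the operational definition of nonlocality, so there is nothing to add.
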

The proof of this statement can be found in the supplementary material of Ref. \cite{krotka}. Notice that originally it was proven there assuming that $\Lambda_i$ are all quantum channels, i.e. completely positive trace--preserving maps, however, this assumption can be relaxed as we do here, but remembering that we always need to guarantee that $\otimes_{i=1}^K \Lambda_i(\rho_{A_1\ldots A_K})\geq 0$. One also has to bear in mind that the dual map of a trace-preserving positive map is also positive and moreover unital, i.e., preserves the identity.

A combination of this fact with our construction of 
multipartite genuinely entangled states stated in Theorem \ref{thm2}
gives rise to a method of generation of GME states that 
are not GMN. Concretely, we have the following statement.
\begin{thm}\label{thm4}
Let $\rho_{A_1\ldots A_K}$ be a $K$-partite quantum state 
acting on $\mathcal{H}_{d,K}$ that has a fully local model for 
generalized measurements. Let then $\Lambda_i:\calB(\calH_{A_i})\to \calB(\mathcal{V}_i)$
be a collection of isometries with output subspaces $\mathcal{V}_i$ 
being either symmetric or GES. Then, the $N$-partite state 
\begin{equation}
\sigma_{\mathbf{A}}=(\Lambda_1\otimes\ldots\otimes \Lambda_K)(\rho_{A_1\ldots A_K})
\end{equation}
is GME and $K$-local with respect to the $K$--partition induced by  $\Lambda$'s.
\end{thm}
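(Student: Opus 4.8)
The plan is to simply combine Theorem~\ref{thm2} and Theorem~\ref{thm3}, since each of the two conclusions -- genuine multipartite entanglement on one hand, and $K$-locality with respect to the partition induced by the $\Lambda_i$'s on the other -- follows from exactly one of those results once we check that the hypotheses apply. Concretely, I would first invoke Theorem~\ref{thm2}: the isometries $\Lambda_i:\calB(\calH_{A_i})\to\calB(\mathcal{V}_i)$ have output subspaces $\mathcal{V}_i$ that are by assumption either symmetric or genuinely entangled in the corresponding $|S_i|$-partite Hilbert space, and $\rho_{A_1\ldots A_K}$ is GME (a fully local state for generalized measurements that is genuinely entangled -- one must note that such states exist, e.g.\ the Hirsch~\etal\ construction referenced before Theorem~\ref{thm3}, so the hypothesis is non-vacuous). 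Theorem~\ref{thm2} then immediately yields that $\sigma_{\mathbf{A}}=(\Lambda_1\otimes\cdots\otimes\Lambda_K)(\rho_{A_1\ldots A_K})$ is GME.

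For the second half I would apply Theorem~\ref{thm3}. Here the one genuine verification is that an isometric conjugation $\Lambda_i(\cdot)=V_i(\cdot)V_i^{\dagger}$ with $V_i^{\dagger}V_i=\mathbbm{1}_d$ fits into the framework of Theorem~\ref{thm3}, i.e.\ that it is trace-preserving and that its dual map $\Lambda_i^{\dagger}(\cdot)=V_i^{\dagger}(\cdot)V_i$ is PPPO. Trace preservation is just $\tr[V_i X V_i^{\dagger}]=\tr[V_i^{\dagger}V_i X]=\tr[X]$. For the PPPO property, note that $\Lambda_i^{\dagger}$ is completely positive (it is a single-Kraus conjugation), hence in particular positive, hence positive on products of positive operators -- so the weaker PPPO hypothesis of Theorem~\ref{thm3} is comfortably satisfied. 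One also checks that $\otimes_{i=1}^K\Lambda_i(\rho_{A_1\ldots A_K})=\bigl(\bigotimes_i V_i\bigr)\rho_{A_1\ldots A_K}\bigl(\bigotimes_i V_i\bigr)^{\dagger}\geq 0$ since conjugation preserves positivity. Theorem~\ref{thm3} then tells us that $\sigma_{\mathbf{A}}$ is $K$-local with respect to the $K$-partition $S_1|\ldots|S_K$ induced by the $\Lambda_i$'s.

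Putting the two pieces together gives the statement. I do not expect a serious obstacle here: the theorem is essentially a corollary, and the proof is a two-line citation of Theorems~\ref{thm2} and~\ref{thm3}. The only points that merit an explicit sentence are (a) confirming that isometric channels are a special case of the trace-preserving, PPPO-dual maps allowed in Theorem~\ref{thm3} (so that the locality conclusion transfers verbatim), and (b) flagging, perhaps as a remark, that the same argument goes through if ``fully local for generalized measurements'' is replaced by ``$K$-local'' or if one uses the no-signalling operational definition instead of Svetlichny's, since Theorem~\ref{thm3}'s mechanism -- pulling the local model back through the channels -- is insensitive to the nature of the response functions. If anything is delicate it is purely expository: making sure the ``$K$-partition induced by $\Lambda$'s'' in the statement is identified unambiguously with the grouping $S_1|\ldots|S_K$ appearing in both ingredient theorems.
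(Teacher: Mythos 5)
Your proposal is correct and follows essentially the same route as the paper: the paper's proof likewise just notes that the dual of an isometry is PPPO so that Theorem~\ref{thm3} gives $K$-locality with respect to the induced partition, and then invokes Theorem~\ref{thm2} for genuine multipartite entanglement. Your explicit verifications (trace preservation, complete positivity of the dual, positivity of the output) merely fill in details the paper leaves as ``direct to see.''
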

\begin{proof}It is direct to see that the dual map of an isometry is PPPO and therefore it follows from theorem \ref{thm3} that the state $\sigma_{\mathbf{A}}$
has a $K$-local model for generalized measurements with respect to the $K$-partition 
induced by the maps $\Lambda_i$. Then, by virtue of theorem \ref{thm2}, the fact that $\rho_{A_1\ldots A_K}$ is GME implies that so is $\sigma_{\mathbf{A}}$. 
\end{proof}

Now, using the above result, we demonstrate that some of the states introduced in Section \ref{Examples} are GME but not GMN according to the 
Svetlichny definition of nonlocality because they all have bilocal model 
for generalized measurements. Let us consider each example separately from this perspective.

\paragraph{Example 1.} The class of $N$-qudit states from  Eq. (\ref{example1}) was already considered in 
Ref. \cite{krotka} also in the context of nonlocality. For completeness, let us recall these results. 

It was proven in Ref. \cite{TothAlmeidaPRL} that the isotropic states (\ref{isotr})
admit a local model for generalized measurements for any $p$ obeying
\begin{equation}
p\leq p_{\mathrm{GM}}\equiv \frac{(3d-1)(d-1)^{d-1}}{(d+1)d^d}.
\end{equation}
Hence, the $N$-partite states $\rho_{N}(p)$ [eq. (\ref{example1})] are genuinely entangled 
but not genuinely nonlocal for any $1/(d+1)<p\leq p_{\mathrm{GM}}$. 

The same reasoning applies to the states $\hat{\rho}_{2n}(p)$ with the only difference now that $d=2^{n-2}$.

\paragraph{Example 2.} Let us now pass to a more general class of $N$-qudit states
given in Eq. (\ref{mixture}). It was shown in Ref. \cite{TothAlmeidaPRL}
that the two-qudit mixed states (\ref{genmixture}) have a local model for 
generalized measurements for 
\begin{equation}\label{condit2}
p\leq \widetilde{p}_{\mathrm{GM}}\equiv \frac{p_{\mathrm{GM}}}{(1-p_{\mathrm{GM}})(d-1)+1},
\end{equation}
and thus, via our construction, the $N$-partite states are bilocal according to the 
Svetlichny definition of nonlocality for the same values of $p$. So, among all states 
$\rho_{N,\boldsymbol{\mu}}(p)$ defined by Eq. (\ref{mixture}) and parametrized by the vector $\boldsymbol{\mu}$ of Schmidt coefficients, there are GME states 
that are bilocal if 
\begin{equation}\label{cond14}
\widetilde{p}_{\mathrm{GM}}> 1/(d^2\theta_{\boldsymbol{\mu}}+1),
\end{equation}
where, to recall,
$\theta_{\boldsymbol{\mu}}=\max_{i\neq j}\{\sqrt{\mu_i\mu_j}\}$. In fact, it is not difficult to find such $\boldsymbol{\mu}$. In particular, due to the fact that 
$\widetilde{p}_{\mathrm{GM}}\geq (3d-1)/[4d(d+1)]$, one sees that the condition (\ref{cond14})
is satisfied whenever $\theta_{\boldsymbol{\mu}} > (1/d^2)[(4d^2+d+1)/(3d-1)]$ and, clearly, entangled pure states $\ket{\psi_{\boldsymbol{\mu}}}$ for which this last inequality is satisfied do exist. 

Let us finally mention that there are bipartite quantum states admitting local models for generalized measurements for larger values of $p$ than (\ref{condit2}) for which our construction can be applied. As shown in Ref. \cite{TothAlmeidaPRL} the following mixture 
\begin{equation}
\widetilde{\rho}_{\boldsymbol{\mu}}(p)=p\proj{\psi_{\boldsymbol{\mu}}}+(1-p)\rho_A\otimes \frac{\mathbbm{1}_d}{d},
\end{equation}
where $\ket{\psi_{\boldsymbol{\mu}}}$ is a pure state defined in Eq. (\ref{pure})
and $\rho_A$ is its single-party marginal, has a local model for all measurements for
any $p\leq p_{\mathrm{GM}}$.

\paragraph{Example 3.} Unfortunately for the class of states (\ref{noisyDicke})
the condition (\ref{cond14}) is not satisfied and therefore we cannot claim 
his state to be an example of a GME state that is not GMN.

\paragraph{Example 4.} The states (\ref{bell-diag}) violate the CHSH inequality and in consequence are nonlocal in the whole region of entanglement \cite{horo-spin-12}. In turn, the states (\ref{bell-diag-extended}) are not bilocal.

\paragraph{Example 5.} 
As shown in Ref. \cite{TothAcin} the tripartite state 
(\ref{tripartite}) has a hybrid local model, i.e., for PMs on $A$ and POVMs on $B$ and $C$. For this reason the GME state (\ref{Ntripartite}) cannot be claimed to be $3$--local for any kind of measurements. However, if we keep the party $A$ untouched, i.e., map it identically to itself or unitarily conjugate it, we obtain a GME state which has a $3$-local model for projective measurements on $A$ and general measurements for the rest of the parties. This possibility has been already noticed in \cite{LHV-review}

Importantly, the original state can be extended to a qutrit-qubit-qubit state wich does have a local model for general measurements still being genuinely entangled \cite{private-Tulio}. Such a state then, after the application of the extending maps, will be $3$--local for all measurements.
%
%

\paragraph{Example 6.} This state is known to be nonlocal for $p\ne 1/2$. The case $p=1/2$ remains, to the best of our knowledge, unsolved, although the results of Ref. \cite{kaszlikowski} might indicate that for this value of the parameter they are (fully) local.
\newline

%

Consequently, we provide more examples, over those provided in Ref. \cite{krotka}, of $N$-partite quantum states
giving rise to the inequivalence between entanglement and nonlocality in the multiparty case. 

This inequivalence persists if one considers operational definitions of
nonlocality formulated in Refs. \cite{NonlocalityDefinition}. For completeness, we recall the proof of this fact from Ref. \cite{krotka}.
 Consider
those entangled states that have a local model for generalized measurements, meaning that
\begin{equation}\label{modelB}
p_{\rho}(a,b|M_A,M_B)=\int_{\Omega}\mathrm{d}\lambda\,\omega(\lambda)p_{\rho}(a|M_A,\lambda)p_{\rho}(b|M_B,\lambda)
\end{equation}
holds for any $M_A$ and $M_B$ in which one of the response functions, say the one on Bob's side, is quantum, that is, it can be written as $p_{\rho}(b|M_B,\lambda)=\Tr[\sigma_{\lambda}M_{b}]$, where $\sigma_{\lambda}$ is some quantum state representing the hidden variable $\lambda$ and $\{M_b\}$ are the measurement operators of the measurement $M_B$. We have added subscripts to the response functions to stress for which states the model is considered. We then consider two isometries
\begin{equation}\label{LHV}
\Lambda_{A\to A_1}:\calB(\mathbbm{C}^d)\to \calB(\mathbbm{C}^d) \qquad \Lambda_{B\to \bar{S}}:\calB(\mathbbm{C}^d)\to \calB(\calV_2).
\end{equation} 
The first one is trivial as its output space
is the same as the initial one (in fact it is just a unitary conjugation), whereas the send one maps to some subspace 
$\calV_2$ of an $(N-1)$-partite Hilbert space corresponding to parties $A_2,A_2,\cdots, A_N$. Their application to $\rho_{AB}\in \calB(\mathbbm{C}^d \ot \mathbbm{C}^d)$ gives us 
an $N$-partite state $\sigma_{\mathbf{A}}=(\Lambda_{A\to A_1}\otimes \Lambda_{B\to \bar{S}})(\rho_{AB})$ which, according to Theorem \ref{thm4} is GME and has a bilocal model (\ref{model}) with respect to the bipartition $A_1 | A_2 \cdots A_N$ obtained from the local model (\ref{LHV}), which we can explicitly write as
\begin{equation}
p_{\sigma}(\mathsf{a}|\mathsf{M})=
\int_{\Omega}\mathrm{d}\lambda\,\omega(\lambda)
p_{\sigma}(a_1|\hat{M}_{1},\lambda)
p_{\sigma}(a_2,\ldots,a_N|\hat{M}_{2},\ldots,\hat{M}_N,\lambda)
\end{equation}
where the response function of the party $A_1$ is defined as $p_{\sigma}(a_1|\hat{M}_{1},\lambda)=p_{\rho}(a_1|M_A,\lambda)$ with the measurement operators of the measurement $M_A$ given by $M_a=\Lambda^{\dagger}_{A_1\to A}(\hat{M}_a)$, whereas the response function corresponding to the parties $A_2,\ldots,A_N$ is defined as $p_{\sigma}(a_2,\ldots,a_N|\hat{M}_2,\ldots,\hat{M}_N,\lambda)=p(\mathsf{a}|M_B,\lambda)$, where $\mathsf{a}:=a_2,\ldots,a_N$  and the measurement operators of the measurement $M_B=\{M_{\mathsf{a}}\}_{\mathsf{a}}$ are given by
\begin{equation}
M_{\mathsf{a}}=
\Lambda_{\bar{S}\to B}^{\dagger}(M_{a_2}^{(2)}\otimes \ldots\otimes M^{(N)}_{a_N}).
\end{equation}
To show that the model is in agreement with the operational definitions of nonlocality, it suffices to show that the response function $p_{\sigma}(a_2,\ldots,a_N|\hat{M}_2,\ldots,\hat{M}_N,\lambda)$ obeys the no-signaling principle. To this end, we exploit
the fact that Bob's response function in the model (\ref{modelB}) has a quantum realization, meaning that

%
\begin{eqnarray}
p_{\sigma}(a_2,\ldots,a_N|\hat{M}_2,\ldots,\hat{M}_N,\lambda)&\!\!\!=\!\!\!&\Tr[\sigma_{\lambda}M_{\mathsf{a}}]\nonumber\\
&\!\!\!=\!\!\!&\Tr[\sigma_{\lambda}\Lambda_{\bar{S}\to B}^{\dagger}(M_{a_2}^{(2)}\otimes \ldots\otimes M^{(N)}_{a_N})]\nonumber\\
&\!\!\!=\!\!\!&\Tr[\Lambda_{B\to\bar{S}}(\sigma_{\lambda})(M_{a_2}^{(2)}\otimes \ldots\otimes M^{(N)}_{a_N})].
\end{eqnarray}
Due to the fact that $\Lambda_{B\to \bar{S}}$ is a positive trace-preserving map, 
$\Lambda_{B\to\bar{S}}(\sigma_{\lambda})$ is a valid quantum state, and therefore 
$p_{\sigma}(a_2,\ldots,a_N|\hat{M}_2,\ldots,\hat{M}_N,\lambda)$ is a quantum response function, meaning that it necessarily obeys the nonsignaling principle (see, e.g., Ref. \cite{Unified}).

To complete the proof let us mention that there exist local models in which 
one of the response functions is quantum, with the well--known example being the  Barrett model \cite{Barrett} for the Werner states \cite{Werner}. Another example of such a model is the one introduced in Ref. \cite{TothAlmeidaPRL} for the isotropic states (\ref{isotr}). Thus, the $N$-partite states given in Eqs. (\ref{example1}) and (\ref{mixture}) [in the second case provided that the condition (\ref{cond14}) is satisfied]
with $L=1$ (that is, the isometry is applied only to the second subsystem) are examples of GME states that are not GMN according to the operational definitions of nonlocality.

\subsection{Multipartite steering}

Interestingly, this last observation suggests that our method
can also be used to construct GME states that are unsteerable in a sense we will make precise in this section.  This has already been mentioned in \cite{krotka} but no details have been given regarding this issue. We fill this gap here.

We first recall the notion of steering
and  begin with the simplest bipartite scenario as
it serves as the basis for our main constructions.

In a steering scenario two parties share at the beginning an unknown state $\rho_{AB} \in \calB(\mathbbm{C}^{d}\ot \mathbbm{C}^{d})$. The question now is whether one of the parties, say Alice ($A$), by performing some measurements on her share of $\rho_{AB}$, is able to collapse (steer) Bob's reduced state into different ensembles.

More formally, suppose Alice may perform some predetermined number of generalized measurements $M_x=\{M_a^x\}$, where $a$ enumerates results of the $x$--th measurement (for simplicity we assume all measurements to have the same number of outcomes) and $M_a^x$ are the measurement operators, i.e., $M_a^x \ge 0$ and $\sum_a M_a^x= \jedynka_d$ for each $x$. The unnormalized Bob's state after the result $a$ of the $x$--th measurement has been obtained reads
\beq
\stan{a|x}{B}= \tr_{A}\left[( M_a^x \otimes \jedynka_B) \varrho_{AB}\right].
\eeq
The collection $\{\stan{a|x}{B}\}_{a,x}$ is referred to as an {\it assemblage}.
Now, we say that the assemblage $\{\stan{a|x}{B}\}_{a,x}$ is {\it unsteerable} from Alice to Bob,  shortly $A\to B$ unsteerable, if the above can be written as 
\beq\label{unsteerable}
\stan{a|x}{B}=\int_{\Omega}\mathrm{d}\lambda \;\omega (\lambda)p(a|x,\lambda)\sigma_{\lambda}^{B}
\eeq
where $\Omega$ is again the set of hidden variables, 
$\omega(\lambda)$ is some probability density over $\Omega$, 
$p(a|M_A,\lambda)$ is a probability distribution, called as previously the response function of party $A$, and 
$\sigma_{\lambda}^B$ are some quantum states. 

 If the opposite holds we say that the given assemblage is {\it $A\to B$ steerable}.
Further, if {\it all} possible assemblages are unsteerable (to reduce the clutter, when the direction of steering is clear we omit $X \to Y$) we say that the state itself is unsteerable and if there exists a steerable assemblage we call the state steerable.
Unsteerable assemblages (states) are said to have a local hidden state model (LHS) -- the name stemming directly from the form of (\ref{unsteerable}).

Not surprisingly, steering cannot take place when the initial state is separable, i.e., $\rho_{AB}=\sum_{\lambda} p_{\lambda} \varrho_{\lambda}^A\otimes \gamma^B_{\lambda} $. In such case it is enough to take $p(a|x,\lambda):= \tr M_a^x \varrho_{\lambda}^A $ and it is immediate to realize that the resulting $\stan{a|x}{B}$ assumes the form of Eq. (\ref{unsteerable}). Entanglement is thus a necessary condition for steering and the presence of steering certifies entanglement of an unknown state $\rho_{AB}$. Nevertheless, it is not sufficient: although all separable states are unsteerable, some entangled states also have an underlying LHS. These notions are thus inequivalent. In fact, steering as a type of correlations can be placed strictly between entanglement and nonlocality.
As a sidenote, notice that the definition of steering introduces asymmetry into the scenario as it distinguishes the roles of the parties. A priori, there is no guarantee that this asymmetry remains at the fundamental level, i.e., some states are steerable from Alice to Bob [Eq. (\ref{unsteerable}) does not hold] but not in the opposite direction [Eq. (\ref{unsteerable}) with the parties exchanged always holds]. It turns out that it is indeed the case: there are known examples of states which are only one--way steerable \cite{ourPRA}.

In modern approach, steering is considered within the paradigm of (one--sided) device--independent quantum information processing. While it is assumed that Bob's measurements (with the roles of the parties in the formulation we have presented here) are well characterized and trusted, Alice's measurements are treated as black boxes performing unspecified POVMs $M_x$. Bob can perform full state tomography and learn the states $\stan{a|x}{B}$ and later basing on this knowledge verify whether Alice indeed might have steered his share of the state.

Let us now move to the multipartite case, where the situation clearly becomes  more involved. It is not our goal to consider all steering scenarios possible in the multipartite setting (see Ref. \cite{Skrzypczyk-multi}). We only focus  on the case of relevance for future purposes in the present paper, which is the scenario in which a group of $L$ parties perform (untrusted) measurements and the question is whether this might lead to the lack of a LHS model, in the sense of Eq. (\ref{unsteerable}), on the remaining $N-L$ parties.

Let us then consider again  $N$ parties sharing a quantum state
$\rho_{\mathbf{A}}\in B\left((\mathbbm{C}^d)^{\otimes N}\right)$. Let the parties be split into two groups according to some bipartition $S|\bar{S}$. For simplicity and no loss of generality we assume that $S=A_1,\ldots, A_{L}$ and $\bar{S}=A_{L+1},\ldots, A_N$. The number $L$ varies from $1$ to $N-1$ depending on the concrete situation. 

Suppose now that the $i$--th party from $S$ may perform  local measurements $M_{x_i}^{(i)}=\{M_{a_i}^{x_i}\}$ with measurement operators $M_{a_i}^{x_i}$; $x_i$'s enumerate the measurements and the outcomes  are labeled $a_i$. The unnormalized postmeasurement states on $\bar{S}$ now read
\beq
\varrho_{\mathsf{a}_S|\mathsf{x}_S}^{\bar{S}}= \Tr _ {S} \left[ \left( M_{a_1}^{x_1}\otimes \cdots\otimes
M_{a_{L}}^{x_L}\otimes \mathbbm{1}_{\bar{S}}\right) \rho_{\mathbf{A}}\right],
\eeq
where  $\bold{\mathsf{a}}:= a_1 \ldots a_{L}$ and $\mathsf{x}_S:=x_1,\ldots,x_{L}$.

The terminology now is the same as in the bipartite case.  The collection $\{ \varrho_{\mathsf{a}_S|\mathsf{x}_S}^{\bar{S}} \}_{\mathsf{a}_S,\mathsf{x}_S}$ is called the assemblage. It is said to be $S\to \bar{S}$ unsteerable if there is a decomposition of the form
\beq\label{post-multiparty}
\varrho_{\mathsf{a}_S|\mathsf{x}_S}^{\bar{S}}=\int_{\Omega}\mathrm{d}\lambda \;\omega(\lambda) p(\mathsf{a}_S|\mathsf{x}_S,\lambda)\sigma_{\lambda}^{\bar{S}},
\eeq
where $\sigma^{\bar{S}}_{\lambda}$ are some states (called as previously hidden) acting on 
the Hilbert space corresponding to the group $\bar{S}$, whereas
$p(\mathsf{a}_S|\mathsf{x}_S,\lambda)$ is a response function
corresponding to parties belonging to $S$
; $\Omega$ is the set of hidden variables and $\omega(\lambda)$ a probability density over $\Omega$. It is worth noting at this point that the hidden states $\sigma_{\lambda}^{\bar{S}}$ are arbitrary in the sense that no separability condition is imposed on them, although the parties from $\bar{S}$ are spatially separated. If there is no decomposition (\ref{post-multiparty}) for the assemblage, it is called $S\to \bar{S}$ steerable.   If for some choice of measurements the resulting assemblage is $S\to \bar{S}$  steerable the state $\rho_{\boldsymbol{\mathsf{A}}}$ is called $S\to \bar{S}$ steerable. Otherwise, it is said to be $S \to \bar{S}$ unsteerable.

Let us now demonstrate that the methods from Section \ref{konstrukcja} may be directly linked with a construction of genuinely multipartite entangled states which are unsteerable for a given bipartition. The result relies on the observation made by some of us 
in Ref. \cite{ourPRA}. We recall it here in a version adapted to our considerations. Let us take, analogously to Sec. \ref{multipartitenonlocality}, a state $\rho_{AB}$
and two trace-preserving maps 
$\Lambda_{A\to S}$
and $\Lambda_{B\to \bar{S}}$ and assume $\Lambda_{A\to S}$ to have the dual \footnote{Given a linear map $\Lambda:\mathcal{B}(\mathcal{H})\to \mathcal{B}(\mathcal{K})$ its dual is defined to be
a linear map $\Lambda^{\dagger}:\mathcal{B}(\mathcal{K})\to
\mathcal{B}(\mathcal{H})$ satisfying
$\Tr[X\Lambda(Y)]=\Tr[\Lambda^{\dagger}(X)Y]$ for any $X\in
\mathcal{B}(\mathcal{K})$ and $Y\in \mathcal{B}(\mathcal{H})$. Recall that if
$\Lambda$ is positive (in particular completely positive), its
dual $\Lambda^{\dagger}$ is also positive. Moreover, if $\Lambda$
is trace-preserving, i.e., $\Tr[\Lambda(X)]=\Tr X$ for any $X$,
the dual map $\Lambda^{\dagger}$ is unital, i.e.,
$\Lambda^{\dagger}(\mathbbm{1}_{\mathcal{K}})=\mathbbm{1}_{\mathcal{H}}$
with $\mathbbm{1}_{\mathcal{X}}$ being the identity operator
acting on $\mathcal{X}$.}  map PPPO and 
$\Lambda_{B\to \bar{S}}$ to be positive. Assuming then that 
$(\Lambda_{A\to A_1}\otimes \Lambda_{B\to \bar{S}})(\rho_{AB})\geq 0$, we have the following fact (see also Ref. \cite{ourPRA}). 
\begin{fakt}\label{obserwacja-steering}
Let $\rho_{AB}\in \mathcal{B}(\mathcal{H}_{2,d})$ be $A\to B$ unsteerable. Then, for any pair of trace-preserving positive maps 
$\Lambda_{A\to S}$ and $\Lambda_{B\to \bar{S}}$ such that $\Lambda_{B\to \bar{S}}$ has the dual map PPPO and
$(\Lambda_{A\to S}\otimes \Lambda_{B\to \bar{S}})(\rho_{AB})\geq 0$, the $N$-partite state
\begin{equation}\label{stan-sigma}
\sigma_{\mathbf{A}}=(\Lambda_{A\to
S}\ot\Lambda_{B\to \bar{S}})(\rho_{AB})
\end{equation}
is $S \to \bar{S}$ unsteerable.
\end{fakt}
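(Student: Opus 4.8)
The plan is to transport the unsteerable decomposition of $\rho_{AB}$ through the maps, just as Theorem~\ref{thm3} is used in the nonlocality setting. First I would fix an arbitrary collection of (untrusted) generalized measurements $M_{x_1}^{(1)},\ldots,M_{x_L}^{(L)}$ performed by the parties in $S$ on the state $\sigma_{\mathbf{A}}$ and write out the resulting assemblage
\begin{equation}
\varrho_{\mathsf{a}_S|\mathsf{x}_S}^{\bar{S}}=\Tr_S\!\left[\left(M_{a_1}^{x_1}\otimes\cdots\otimes M_{a_L}^{x_L}\otimes\mathbbm{1}_{\bar{S}}\right)(\Lambda_{A\to S}\otimes\Lambda_{B\to\bar{S}})(\rho_{AB})\right].
\end{equation}
Using the defining property of the dual map, the partial trace over $S$ against $M_{a_1}^{x_1}\otimes\cdots\otimes M_{a_L}^{x_L}$ can be pulled back through $\Lambda_{A\to S}$: the effect of the measurement on the $S$-side of $\sigma_{\mathbf{A}}$ is the same as measuring $\rho_{AB}$ on Alice's side with the single effective POVM whose operators are $N_{\mathsf a}:=\Lambda_{A\to S}^{\dagger}(M_{a_1}^{x_1}\otimes\cdots\otimes M_{a_L}^{x_L})$. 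Here one uses that $\Lambda_{A\to S}^{\dagger}$ is PPPO (so the $N_{\mathsf a}$ are positive) and that $\Lambda_{A\to S}$ is trace-preserving, hence $\Lambda_{A\to S}^{\dagger}$ is unital, so $\sum_{\mathsf a}N_{\mathsf a}=\mathbbm{1}_d$; this makes $\{N_{\mathsf a}\}_{\mathsf a}$a legitimate generalized measurement $M_A$ on Alice's system. Thus $\varrho_{\mathsf{a}_S|\mathsf{x}_S}^{\bar S}=\Lambda_{B\to\bar S}\!\left(\Tr_A[(N_{\mathsf a}\otimes\mathbbm{1}_B)\rho_{AB}]\right)=\Lambda_{B\to\bar S}(\rho^B_{\mathsf a|M_A})$, i.e. the $\bar S$-assemblage is the image under $\Lambda_{B\to\bar S}$ of a bipartite assemblage produced on $\rho_{AB}$ by the measurement $M_A$.

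Next I would invoke the hypothesis that $\rho_{AB}$ is $A\to B$ unsteerable: applied to the measurement $M_A=\{N_{\mathsf a}\}$ this gives, via Eq.~(\ref{unsteerable}), $\rho^B_{\mathsf a|M_A}=\int_\Omega \mathrm{d}\lambda\,\omega(\lambda)p(\mathsf a|M_A,\lambda)\sigma_\lambda^B$ for some probability density $\omega$, response function $p$, and hidden states $\sigma_\lambda^B$. Applying the linear map $\Lambda_{B\to\bar S}$ to both sides (and using that it is trace-preserving and positive, so that $\Lambda_{B\to\bar S}(\sigma_\lambda^B)$ are again legitimate quantum states on $\calH_{\bar S}$) yields
\begin{equation}
\varrho_{\mathsf{a}_S|\mathsf{x}_S}^{\bar S}=\int_\Omega \mathrm{d}\lambda\,\omega(\lambda)\,p(\mathsf a|M_A,\lambda)\,\Lambda_{B\to\bar S}(\sigma_\lambda^B),
\end{equation}
which is exactly the form (\ref{post-multiparty}) with hidden states $\sigma_\lambda^{\bar S}:=\Lambda_{B\to\bar S}(\sigma_\lambda^B)$ and response function $p(\mathsf a_S|\mathsf x_S,\lambda):=p(\mathsf a|M_A,\lambda)$. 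Since the measurements $M_{x_i}^{(i)}$ were arbitrary, $\sigma_{\mathbf A}$ is $S\to\bar S$ unsteerable, completing the proof. I would also remark that the positivity assumption $(\Lambda_{A\to S}\otimes\Lambda_{B\to\bar S})(\rho_{AB})\ge 0$ is needed only to ensure $\sigma_{\mathbf A}$ is a bona fide state in the first place; it plays no further role.

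The one point that needs a little care — and which I expect to be the main (minor) obstacle — is the pull-back step: one must justify that $\Tr_S[(M\otimes\mathbbm{1}_{\bar S})(\Lambda_{A\to S}\otimes\Lambda_{B\to\bar S})(\rho_{AB})]=\Lambda_{B\to\bar S}(\Tr_A[(\Lambda_{A\to S}^{\dagger}(M)\otimes\mathbbm{1}_B)\rho_{AB}])$ as an identity of operators on $\calH_{\bar S}$, not merely of scalars. This follows by testing both sides against an arbitrary operator $Y$ on $\calH_{\bar S}$ and using $\Tr[Y\Lambda_{B\to\bar S}(X)]=\Tr[\Lambda_{B\to\bar S}^{\dagger}(Y)X]$ together with $\Tr[(M\otimes Y')\Lambda_{A\to S}\otimes\mathrm{id}(Z)]=\Tr[(\Lambda_{A\to S}^{\dagger}(M)\otimes Y')Z]$, i.e. it is just the tensor-product/adjoint bookkeeping for duals of TP maps, exactly the manipulation already used (in the nonlocality context) in the proof recalled for Theorem~\ref{thm3}. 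Once that identity is in hand, everything else is a direct substitution.
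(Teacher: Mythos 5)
Your proof is correct and follows essentially the same route as the paper's own: pull the $S$-side product measurement back through the dual of $\Lambda_{A\to S}$ to obtain an effective POVM $\{\widehat{M}^A_{\mathsf a}\}$ on Alice (positive by the PPPO property, normalized by unitality), invoke the $A\to B$ unsteerability of $\rho_{AB}$, and push the hidden states forward through the positive trace-preserving map $\Lambda_{B\to\bar S}$. You even place the PPPO requirement where the paper's proof actually needs it, namely on $\Lambda_{A\to S}^{\dagger}$ (consistent with the discussion preceding the Fact, though the Fact's statement itself attributes it to $\Lambda_{B\to\bar S}$), and your adjoint-bookkeeping justification of the operator-level pull-back identity is the same manipulation the paper performs implicitly.
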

The proof of this fact can be found in Ref. \cite{ourPRA}, however, for completeness we also provide it in the Appendix. We just mention here the origin of the requirement of the dual of  $\Lambda_{B\to \bar{S}}$ to be PPPO. This stems from the fact that an application of this map to the products of local measurement operators on $\bar{S}$ must result in  positive operators, as the latter constitute a generalized measurement.

To construct examples multipartite states that are genuinely entangled and at the same time $S\to \bar{S}$ unsteerable, 
let now 
\beqn \label{izometrie}
&&\Lambda_{A\to S}: B(\mathbbm{C}^d)\to B(\calV_{S}), \non
&&\Lambda_{B\to \bar{S}}:B(\mathbbm{C}^d)\to B(\calV_{\bar{S}})
\eeqn
be two isometries
with $\calV_{S}$ and $\calV_{\bar{S}}$ being subspaces in $\mathcal{H}_{d',L}$ and 
$\mathcal{H}_{d',N-L}$, respectively,
that are either symmetric or genuinely entangled. Then, we have the following fact.
\begin{fakt}
Let $\rho_{AB}$ be entangled but $A\to B$ unsteerable. For any 
pair of the isometries (\ref{izometrie}) the $N$-partite state 
$\sigma_{A}=[\Lambda_{A\to S}\otimes \Lambda_{B\to \bar{S}}](\rho_{AB})$ is GME and $S\to \bar{S}$ unsteerable.    
\end{fakt}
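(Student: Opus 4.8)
The plan is to obtain this statement as a direct synthesis of Theorem \ref{thm1} and Fact \ref{obserwacja-steering}, observing that the isometries in (\ref{izometrie}) simultaneously satisfy the hypotheses of both. First I would record the elementary structural facts about an isometric conjugation $\Lambda(\cdot)=V(\cdot)V^{\dagger}$ with $V^{\dagger}V=\mathbbm{1}_d$: it is a positive (indeed completely positive) trace-preserving map; consequently $\Lambda_{A\to S}\otimes\Lambda_{B\to\bar{S}}$ applied to $\rho_{AB}$ automatically yields a positive operator, so the side condition $(\Lambda_{A\to S}\otimes\Lambda_{B\to\bar{S}})(\rho_{AB})\geq 0$ is vacuously met; and its dual $\Lambda^{\dagger}(\cdot)=V^{\dagger}(\cdot)V$ is PPPO, since for positive operators $P_1,\dots,P_m$ one has $V^{\dagger}(P_1\otimes\cdots\otimes P_m)V\geq 0$ — this is exactly the remark already used in the proof of Theorem \ref{thm4}.

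Granted this, the second step is to invoke Fact \ref{obserwacja-steering} with the pair $(\Lambda_{A\to S},\Lambda_{B\to\bar{S}})$: both maps are trace-preserving and positive, $\Lambda_{B\to\bar{S}}$ has PPPO dual, and positivity of the output was checked above; hence, because $\rho_{AB}$ is $A\to B$ unsteerable, the state $\sigma_{\mathbf{A}}$ of (\ref{stan-sigma}) is $S\to\bar{S}$ unsteerable. The third step is to invoke Theorem \ref{thm1} with the same data: $\rho_{AB}$ is entangled, the subspaces $\calV_{S}\subseteq\mathcal{H}_{d',L}$ and $\calV_{\bar{S}}\subseteq\mathcal{H}_{d',N-L}$ are each either symmetric or genuinely entangled, and $\Lambda_{A\to S},\Lambda_{B\to\bar{S}}$ are isometric, so $\sigma_{\mathbf{A}}$ is GME. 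Combining the two conclusions proves the claim.

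There is essentially no obstacle beyond bookkeeping here — the substance is carried entirely by Theorem \ref{thm1} (whose engine is Fact \ref{fact4}) and Fact \ref{obserwacja-steering}. The only point deserving a sentence of care is that these two prior results phrase their requirements on the maps differently — "isometric" in one case, "trace-preserving, positive, with PPPO dual" in the other — and one must note explicitly that an isometric conjugation meets all of them at once, so no compatibility conflict arises. One might also add a remark that the hypothesis is non-vacuous: entangled but $A\to B$ unsteerable states $\rho_{AB}$ do exist (e.g.\ the isotropic states in the regime where they admit a local model, or the one-way steerable examples of Ref.~\cite{ourPRA}), so the construction genuinely yields new GME states with the advertised unsteerability across $S|\bar{S}$.
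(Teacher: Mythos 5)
Your proposal is correct and follows essentially the same route as the paper, which likewise proves the claim by combining Fact \ref{obserwacja-steering} (for $S\to\bar{S}$ unsteerability) with the earlier GME results (Theorem \ref{thm1} via Fact \ref{fact4}); your extra verification that isometric conjugations satisfy both sets of hypotheses simultaneously is a welcome, if routine, elaboration of what the paper leaves implicit.
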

\begin{proof}From Fact \ref{obserwacja-steering} we have that 
the state $\sigma_{\mathbf{A}}$ is $S\to \bar{S}$ unsteerable, whereas from previous facts that it is GME.
\end{proof}

Thus, some of the multipartite states introduced in Section \ref{Examples}
are not only $S|\bar{S}$ local but also $S \to \bar{S}$ unsteerable. 

To have a definition of unsteerability in the multipartite case \'a la definition of locality of Refs. \cite{NonlocalityDefinition}, one 
needs to assume that the response function $p(\mathsf{a}_S|\mathsf{x}_S,\lambda)$
is nonsignaling, and accordingly assume that $L=1$ in the above construction.

\section{Conclusion} 
\label{conclusion}

In Ref. \cite{krotka} we outlined a method of generating 
genuinely entangled $N$-partite states from $K$-partite genuinely entangled ones 
with $K<N$. The aim of the present work is to describe this method in  greater detail and, more importantly, to significantly generalize it. In order to achieve this goal we provide quite general entanglement criteria allowing one to check whether a given multiparty state is genuinely entangled. In the particular case of $K=2$ these criteria are simple to formulate: if
an $N$-partite state is entangled across certain bipartition $S|\bar{S}$ and its
parts corresponding to $S$ and $\bar{S}$ are supported on some genuinely entangled subspace or the symmetric one, then this state is genuinely entangled.

We then apply our method to some known classes of bipartite and multipartite entangled states  obtaining examples of mixed $N$-partite states that are genuinely entangled. At the same time, also generalizing the results of Ref. \cite{krotka}, 
we demonstrate that with the aid of our method we can construct further examples of 
$N$-partite genuinely entangled states which are not genuinely nonlocal. These novel classes of
states provide further support for the statetement made in Ref. \cite{krotka} (see also Ref. \cite{Hirsch}) that entanglement and nonlocality are inequivalent notions in the multiparty case.

Our research provokes several natural questions. For instance, 
it would be interesting to see whether with the aid of our method
one can create further interesting classes of multiparty states
or whether our entanglement criteria could be used to prove that some existing 
classes of states are genuinely entangled. On the other hand, it is of interest to 
see whether the classes of genuinely entangled states we construct admit 
more restrictive local models, in particular, the fully local one.

\section*{Acknowledgments}
This project has received funding from the European Union's Horizon 2020 research and innovation programme under the Marie Sk{\l}odowska-Curie grant agreements No 748549 and No 705109.

\appendix

\section{Proofs}

Here we provide the proof of Lemma \ref{LLem1} from the main text used in the proof of Facts \ref{fact4} and \ref{obserwacja-steering}.

\noindent\textbf{Lemma 1.} \cite{krotka}{\it 
 Let $\ket{\psi}\in\mathcal{H}_{d,N}$ be a pure state product with respect to
some bipartition $T|\bar{T}$. If $P_{S}^{\mathrm{sym}}\ket{\psi}=\ket{\psi}$
where $S$ is a subset of $\textbf{A}$ having  nontrivial 
overlaps with $T$ and $\bar{T}$, i.e.,  $S\cap T \ne \emptyset$ and  $S\cap \bar{T} \ne \emptyset$, then $\ket{\psi}$ is also product with respect
to the bipartition $S|\bar{S}$.
}
%

%
\begin{proof}From the assumption that $\ket{\psi}$ is product with respect to
$T|\bar{T}$ it follows that $\ket{\psi}=\ket{\psi_T}\ket{\phi_{\bar{T}}}$.
Let us then consider two cases: (i) either $T$ or $\bar{T}$ is contained in
$S$, (ii) none them is contained in $S$.

\textit{Case (i).} For concreteness, but without any loss of generality, we
assume that $T$ is contained in $S$. Then, we write the vector $\ket{\phi_{\bar{T}}}$ 
in its Schmidt decomposition with respect to the bipartition $[\bar{T}\cap S]|\bar{T}'$ with $\bar{T}'=\bar{T}\setminus (\bar{T}\cap S)$ as
\begin{equation}\label{Geneva-1}
 \ket{\phi_{\bar{T}}}=\sum_{i}\sqrt{\lambda_i}\ket{\phi'^i_{\bar{T}\cap
S}}\ket{\omega^i_{\bar{T}'}},
\end{equation}
where $\lambda_i$ are the Schmidt coefficients, and $\ket{\phi'^i_{\bar{T}\cap
S}}$ and $\ket{\omega^i_{\bar{T}'}}$ are some orthonormal bases defined on 
$\bar{T}\cap S$ and $\bar{T}'$, respectively. The condition
$P_S^{\mathrm{sym}}\ket{\psi}=\ket{\psi}$ implies then that for any $i$, the
following identity
\begin{equation}
 P_S^{\mathrm{sym}}\ket{\psi_T}\ket{\phi'^i_{\bar{T}\cap
S}}=\ket{\psi_T}\ket{\phi'^i_{\bar{T}\cap S}}
\end{equation}
is satisfied. This, by virtue of the results of Refs. \cite{AnnPhysLew,JaJordiPRA} implies that all vectors $\ket{\psi_T}\ket{\phi'^i_{\bar{T}\cap S}}$ are fully product, i.e., they can be written as
\begin{equation}\label{wielka-dupa}
 \ket{\psi_T}\ket{\phi'^i_{\bar{T}\cap S}}=\ket{e_i}^{\ot|S|}
\end{equation}
with $\ket{e_i}$ being some qudit vectors. In fact, the vectors $\ket{e_i}$ can differ only by a
phase. To see that it is enough to trace out the $\bar{T}\cap S$ part of
(\ref{wielka-dupa}), which gives $\proj{\psi_T}=\proj{e_i}^{\ot |T|}$ for any $i$,
meaning that they are all equal up to
some phase (there are clearly some relations between these phases, but they
are not important for what follows). As a result, 
\begin{equation}\label{dupa}
 \ket{\psi_T}\ket{\phi'^i_{\bar{T}\cap
S}}=\mathrm{e}^{\mathrm{i}\delta_i}\ket{e}^{\ot|S|}.
\end{equation}
Substituting (\ref{wielka-dupa}) into (\ref{Geneva-1}) one finally sees that $\ket{\psi}$ 
is product with respect to $S|\bar{S}$. In fact, it is even fully product on
the subspace $S$.

\textit{Case (ii).} Let us expand both vectors $\ket{\psi_T}$ and
$\ket{\phi_{\bar{T}}}$ as
\begin{equation}
 \ket{\psi_T}=\sum_{i}\sqrt{\lambda_i}\ket{\psi'^i_{T\cap S}}\ket{\omega'^i_{T'}}
\end{equation}
and
\begin{equation}\label{Geneva}
 \ket{\phi_{\bar{T}}}=\sum_{i}\sqrt{\gamma_i}\ket{\phi'^i_{\bar{T}\cap
S}}\ket{\omega''^i_{\bar{T}'}},
\end{equation}
where $\ket{\omega'^i_{T'}}$ and $\ket{\omega''^i_{\bar{T}'}}$ are some
orthonormal bases defined on $T\setminus(T\cap S)$ and $\bar{T}\setminus
(\bar{T}\cap S)$. The condition $P_S^{\mathrm{sym}}\ket{\psi}=\ket{\psi}$
implies that for any pair $i,j$ one has
\begin{equation}\label{equations}
 P_S^{\mathrm{sym}}\ket{\psi'^i_{T\cap S}}\ket{\phi'^j_{\bar{T}\cap
S}}=\ket{\psi'^i_{T\cap S}}\ket{\phi'^j_{\bar{T}\cap
S}}.
\end{equation}
(Notice that $(T\cap S)\cup (\bar{T}\cap S)=S$). Again, due to the results of \cite{AnnPhysLew,JaJordiPRA}
every vector in the above must be fully product. Moreover, one can check that
these fully product vectors may differ only by a phase, so that
\begin{equation}\label{Genf}
 \ket{\psi'^i_{T\cap S}}\ket{\phi'^j_{\bar{T}\cap
S}}=\mathrm{e}^{\mathrm{i}\delta_{ij}}\ket{e}^{\ot |S|},
\end{equation}
where $\delta_{ij}$ are some phases (notice that (\ref{equations}) impose
some conditions on these phases, however, they are not important for the
proof; still they must be such that the state remains product with respect to
$T|\bar{T}$). Putting (\ref{Genf}) into $\ket{\psi_T}\ket{\phi_{\bar{T}}}$ one
finally obtains that $\ket{\psi}$ is product with respect to $S|\bar{S}$. In
particular it is of them form
%
$ \ket{\psi_T}\ket{\phi_{\bar{T}}}=\ket{e}^{\ot |S|}\ket{\bar{\omega}_{\bar{S}}}$,
%
where $\ket{\bar{\omega}_{\bar{S}}}$ is a state defined on $\bar{S}=T' \cup
\bar{T}'$. 
%
%
%
%
This completes the proof.
\end{proof}

Let us now provide the proof of Fact \ref{obserwacja-steering}.

\noindent\textbf{Fact 4.} {\it 
Let $\rho_{AB}\in \mathcal{B}(\mathcal{H}_{2,d})$ be $A\to B$ unsteerable. Then, for any pair of trace-preserving positive maps 
$\Lambda_{A\to S}$ and $\Lambda_{B\to \bar{S}}$ such that $\Lambda_{B\to \bar{S}}$ has the dual PPPO and
$[\Lambda_{A\to S}\otimes \Lambda_{B\to \bar{S}}](\rho_{AB})\geq 0$, the $N$-partite state
\begin{equation}\label{stany-sigma}
\sigma_{\mathbf{A}}=[\Lambda_{A\to
S}\ot\Lambda_{B\to \bar{S}}](\rho_{AB})
\end{equation}
is $S \to \bar{S}$ unsteerable.
}

\begin{proof}
Let $\Lambda^{\dagger}_{S\to A}$ be the dual map of $\Lambda_{A\to S}$ 
%
and define the following operators
\begin{equation}\label{operatory-nowe-S}
 \widehat{M}^{A}_{a_1}=\Lambda^{\dagger}_{S\to A}\big( M_{a_1}^{(1)}\otimes \ldots\otimes 
 M_{a_L}^{(L)}\big)
\end{equation}
%
From the duality of the map in the construction above it immediately follows that the resulting operators form proper POVMs.
 Unnormalized states after the measurements results $a_1$ have been obtained upon measuring $M_1$ by the party $A_1$ on $\sigma_{\boldsymbol{\mathbf{A}}}$ [Eq. (\ref{stany-sigma})] read
\begin{eqnarray}
\varrho_{\mathsf{a}_{S}|\mathsf{M}_S}^{\bar{S}}&=& \Tr_{S} \left[ \left( M_{a_1}^{(1)}\otimes\ldots\otimes M_{a_L}^{(L)}\otimes\mathbbm{1}_{\bar{S}}\right) \sigma_{\mathbf{A}}\right]\nonumber\\
&=& \Tr _ {S} \left[ \left( M_{a_1}^{(1)}\otimes\ldots\otimes M_{a_L}^{(L)}\otimes \mathbbm{1}_{\bar{S}}\right) (\Lambda_{A\to
S}\ot\Lambda_{B\to \bar{S}})(\rho_{AB})\right]\nonumber\\
&=& \Lambda_{B\to \bar{S}} \left( \Tr _ {A} \left[ \widehat{M}^{A}_{\mathsf{a}} \rho_{AB}\right]  \right)\nonumber \\
&=& \Lambda_{B\to \bar{S}} \left( \sum_{\lambda}\omega(\lambda) p(\mathsf{a}_S|\widehat{M}_A,\lambda) \sigma_{\lambda}^B    \right)\nonumber\\
&=& \sum_{\lambda}\omega(\lambda) p(\mathsf{a}_S|\widehat{M}_A,\lambda) \Lambda_{B\to \bar{S}} (\sigma_{\lambda}^B).
\end{eqnarray}
The penulitmate equality, in which $p(a_1|\widehat{M}_A,\lambda)$ is some measurement-dependent probability distribution, is a consequence of $A \to B$ unsteerability of $\rho_{AB}$. Post-measurement states admit thus a decomposition of the form (\ref{post-multiparty}), with the hidden states $\{\Lambda_{B\to \bar{S}} (\sigma_{\lambda}^B)\}_{\lambda}$, so the state $\sigma_{\boldsymbol{\mathbf{A}}}$ is $A_1 \to \bar{S}$ unbisteerable.
\end{proof}

\end{document}